\newenvironment{lyxlist}[1]
	{\begin{list}{}
		{\settowidth{\labelwidth}{#1}
		 \setlength{\leftmargin}{\labelwidth}
		 \addtolength{\leftmargin}{\labelsep}
		 }}
	{\end{list}}
\theoremstyle{plain}
\newtheorem{thm}{\protect\theoremname}
\theoremstyle{plain}
\newtheorem{lem}{\protect\lemmaname}
\theoremstyle{plain}
\newtheorem{prop}{\protect\propositionname}
\providecommand{\lemmaname}{Lemma}
\providecommand{\propositionname}{Proposition}
\providecommand{\theoremname}{Theorem}
\begin{document}
\title{Approximation results regarding the multiple-output Gaussian gated
mixture of linear experts model}
\author{Hien D. Nguyen$^{1}$\thanks{(Corresponding author email: h.nguyen5@latrobe.edu.au). $^{1}$Department
of Mathematics and Statistics, La Trobe University, Melbourne Victoria
3086, Australia. $^{2}$Laboratoire de Mathématiques Nicolas Oresme,
Université de Caen Normandie, 14000 Caen-Cedex, France. $^{3}$Université
de Grenoble Alpes, Inria, CNRS, Grenoble INP$^{\dagger}$, LJK, 38000
Grenoble, France. $^{\dagger}$Institute of Engineering Université
de Grenoble Alpes.}$\text{ }$, Faicel Chamroukhi$^{2}$, and Florence Forbes$^{3}$}
\maketitle

\subsection*{Abstract}

Mixture of experts (MoE) models are a class of artificial neural networks
that can be used for functional approximation and probabilistic modeling.
An important class of MoE models is the class of mixture of linear
experts (MoLE) models, where the expert functions map to real topological
output spaces. Recently, Gaussian gated MoLE models have become popular
in applied research. There are a number of powerful approximation
results regarding Gaussian gated MoLE models, when the output space
is univariate. These results guarantee the ability of Gaussian gated
MoLE mean functions to approximate arbitrary continuous functions,
and Gaussian gated MoLE models themselves to approximate arbitrary
conditional probability density functions. We utilize and extend upon
the univariate approximation results in order to prove a pair of useful
results for situations where the output spaces are multivariate. We
do this by proving a pair of lemmas regarding the combination of univariate
MoLE models, which are interesting in their own rights.

\textbf{Keywords: }artificial neural network; conditional model; Gaussian
distribution; mean function; multiple-output; multivariate analysis

\section{Introduction}

Mixture of experts (MoE) models are a class of probabilistic artificial
neural networks that were first introduced by \citet{Jacobs1991},
and further developed in \citet{Jordan1994} and \citet{Jordan1995}.
In the contemporary setting, MoE models have become highly popular
and successful in a range of applications including audio classification,
bioinformatics, climate prediction, face recognition, financial forecasting,
handwriting recognition, and text classification, among many others;
see \citet{Yuksel2012}, \citet{Masoudnia2014}, and \citet{Nguyen2018}
and the references therein.

Let $\mathbb{X}\subseteq\mathbb{R}^{p}$ and $\mathbb{Y}$ be input
and output spaces (the specific nature of the output space will be
discussed in the sequel), respectively, where $p\in\mathbb{N}$ (the
zero-exclusive natural numbers). Let $\bm{X}\in\mathbb{X}$ and $\bm{Y}\in\mathbb{Y}$
be observable random variables, where $\bm{X}$ may also be taken
to be non-stochastic (i.e. $\bm{X}=\bm{x}$ with probability one,
for some fixed $\bm{x}\in\mathbb{X}$). In addition to $\bm{X}$ and
$\bm{Y}$, define a third latent random variable $Z\in\left[n\right]=\left\{ 1,\dots,n\right\} $,
such that 
\begin{equation}
\mathbb{P}\left(Z=z|\bm{X}=\bm{x};\bm{\alpha}\right)=\text{Gate}_{z}\left(\bm{x};\bm{\alpha}\right)\text{,}\label{eq: Gates}
\end{equation}
where $\text{Gate}_{z}\left(\bm{x};\bm{\alpha}\right)$ are parametric
functions (known as gating functions), which depend on some vector
$\bm{\alpha}$ in a real space of fixed dimension. We call $n$ the
number of experts in the MoE. Here, the gating functions are required
to satisfy the conditions $\text{Gate}_{z}\left(\bm{x};\bm{\alpha}\right)>0$,
and $\sum_{z=1}^{n}\text{Gate}_{z}\left(\bm{x};\bm{\alpha}\right)=1$,
for each $z\in\left[n\right]$, $\bm{x}$, and $\bm{\alpha}$.

The probability density functions (PDFs) of $\bm{Y}$, given $\bm{X}=\bm{x}$
and $Z=z$, are referred to as expert functions, which are parametric
and can be written as
\begin{equation}
f\left(\bm{y}|\bm{x},z\right)=\text{Expert}_{z}\left(\bm{y};\bm{x},\bm{\beta}_{z}\right)\text{,}\label{eq: Experts}
\end{equation}
where $\bm{\beta}_{z}$ is a parameter vector in a real space of fixed
dimensionality, for each $z$. For brevity, we write $f\left(\bm{y}|\bm{x},z\right)=f\left(\bm{y}|\bm{X}=\bm{x},Z=z;\bm{\beta}_{z}\right)$.
We combine the gating functions (\ref{eq: Gates}) and expert functions
(\ref{eq: Experts}), via the law of total probability, to produce
the conditional PDF of $\bm{Y}$ given $\bm{X}=\bm{x}$:
\[
f\left(\bm{y}|\bm{x};\bm{\theta}\right)=\sum_{z=1}^{n}\text{Gate}_{z}\left(\bm{x};\bm{\alpha}\right)\text{Expert}_{z}\left(\bm{y};\bm{x},\bm{\beta}_{z}\right)\text{,}
\]
where, $\bm{\theta}$ is a vector that contains the elements of $\bm{\alpha}$
and $\bm{\beta}_{z}$ ($z\in\left[n\right]$). We refer to $f\left(\bm{y}|\bm{x};\bm{\theta}\right)=f\left(\bm{y}|\bm{X}=\bm{x};\bm{\theta}\right)$
as the MoE model.

Depending on the choices of gating and expert functions, numerous
classes of MoE models can be specified. For example, if $\mathbb{Y}$
is a binary or categorical output space, then one can consider a logistic
or multinomial logistic form (see, e.g. \citealp{Jordan1994,Chen1999}).
If $\mathbb{Y}\subset\mathbb{N}$, then one may follow \citet{Grun2008}
and utilize Poisson experts. When $\mathbb{Y}\subseteq\left(0,\infty\right)$
or $\mathbb{Y}\subseteq\left[0,1\right]$, the mixture of gamma or
beta experts are most appropriate (see, e.g. \citealp{Jiang1999a,Grun2012}).

In this article, we are only concerned with the case where $\mathbb{Y}\subseteq\mathbb{R}^{q}$
($q\in\mathbb{N}$), and when the mean of the expert functions are
linear in $\bm{x}$, so that
\begin{equation}
\mathbb{E}\left(\bm{y}|\bm{X}=\bm{x},Z=z\right)=\bm{a}_{z}+\mathbf{B}_{z}^{\top}\bm{x}=\left[\begin{array}{c}
a_{z1}+\bm{b}_{z1}^{\top}\bm{x}\\
\vdots\\
a_{zq}+\bm{b}_{zq}^{\top}\bm{x}
\end{array}\right]\text{,}\label{eq: Individual Mean Function}
\end{equation}
where we put the elements of $a_{zj}\in\mathbb{R}$ and $\bm{b}_{zj}\in\mathbb{R}^{p}$
($j\in\left[q\right]$) into $\bm{\beta}_{z}$, for each $z$. Here
$\left(\cdot\right)^{\top}$ is the transposition operator, $\bm{a}_{z}^{\top}=\left(a_{z,1},\dots,a_{z,q}\right)\in\mathbb{R}^{q}$,
and $\mathbf{B}_{z}\in\mathbb{R}^{p\times q}$ is a matrix with $j\text{th}$
column $\bm{b}_{z,j}$. Following the nomenclature of \citet{Nguyen2016},
we refer to MoE models with the characteristic above as mixture of
linear experts (MoLE) models. 

Define the $q\text{-dimensional}$ multivariate normal distribution
by its PDF
\[
\phi_{q}\left(\bm{y};\bm{\mu},\bm{\Sigma}\right)=\left|2\pi\bm{\Sigma}\right|^{-1/2}\exp\left[-\frac{1}{2}\left(\bm{y}-\bm{\mu}\right)^{\top}\bm{\Sigma}^{-1}\left(\bm{y}-\bm{\mu}\right)\right]\text{,}
\]
where $\bm{\mu}\in\mathbb{R}^{q}$ is a mean vector and $\bm{\Sigma}\in\mathbb{R}^{q\times q}$
is a symmetric positive-definite covariance matrix. The multivariate
normal linear experts were used to specify MoLE models in the foundational
works of \citet{Jacobs1991} and \citet{Jordan1994}. Alternative
MoLE models using Laplace, $\text{student-}t$, and skew $\text{student-}t$
linear experts have also been considered in \citet{Nguyen2016}, \citet{Chamroukhi2016},
and \citet{Chamroukhi2017}, respectively.

In the MoE literature, there are two dominant choices for gating functions.
The first, and by far the most popular, is the soft-max gate:
\begin{equation}
\text{Gate}_{z}\left(\bm{x};\bm{\alpha}\right)=\frac{\exp\left(c_{z}+\bm{d}_{z}^{\top}\bm{x}\right)}{\sum_{\zeta=1}^{n}\exp\left(c_{\zeta}+\bm{d}_{\zeta}^{\top}\bm{x}\right)}\text{,}\label{eq: Softmax gate}
\end{equation}
where $c_{z}\in\mathbb{R}$ and $\bm{d}_{z}\in\mathbb{R}^{p}$ ($z\in\left[n\right]$)
are put in the parameter vector $\bm{\alpha}$. This choice of gating
was originally considered in \citet{Jacobs1991}.

The second of the dominant gating functions is the Gaussian gating
function, or normalized-Gaussian radial basis gate (cf. \citealp{Wang1992}),
of the form
\begin{equation}
\text{Gate}_{z}\left(\bm{x};\bm{\alpha}\right)=\frac{\pi_{z}\phi_{p}\left(\bm{x};\bm{\mu}_{z},\bm{\Sigma}_{z}\right)}{\sum_{\zeta=1}^{n}\pi_{\zeta}\phi_{p}\left(\bm{x};\bm{\mu}_{\zeta},\bm{\Sigma}_{\zeta}\right)}\text{,}\label{eq: Gaussian gate}
\end{equation}
where $\pi_{z}>0$ and $\sum_{z=1}^{n}\pi_{z}=1$, and the unique
elements of $\pi_{z}$, $\bm{\mu}_{z}$, and $\bm{\Sigma}_{z}$ ($z\in\left[n\right]$)
are put in the parameter vector $\bm{\alpha}$. This gating choice
was originally considered, in the MoE context, by \citet{Xu1995},
although it had been used in the radial basis functions context by
\citet{Wang1992}. The Gaussian gating function has recently gained
some popularity in the literature. For example, \citet{Ingrassia2012}
used the Gaussian gating function within the framework of cluster-weighted
modeling, and \citet{Deleforge2015a} used the Gaussian gates within
the locally-linear mapping framework. Here, both cluster-weighted
models and locally-linear mappings are types of MoE models. The Gaussian
gates have also been used by \citet{Norets2014} and \citet{Norets2017}
for MoE modeling of priors in Bayesian nonparametric regression. Under
some restrictions, one can show that the class of soft-max gates is
a subset of the Gaussian gates (cf. \citealp[Cor. 5]{Ingrassia2012}).

A class of related gating functions to (\ref{eq: Gaussian gate})
are the $\text{student-}t$ gates. This type of gating has been explored
in \citet{Ingrassia2012}, \citet{Ingrassia2014}, and \citet{Perthame2018}.
Multivariate probit gates have also been considered in \citet{Geweke2007a}.

Given any particular choice of gating, we can write the MoLE mean
function as
\[
\bm{m}\left(\bm{x};\bm{\theta}\right)=\mathbb{E}\left(\bm{y}|\bm{X}=\bm{x}\right)=\sum_{z=1}^{n}\text{Gate}_{z}\left(\bm{x};\bm{\alpha}\right)\left[\bm{a}_{z}+\mathbf{B}_{z}^{\top}\bm{x}\right]\text{.}
\]
An important property of MoLE models is their richness of representation
capability. This representational richness has been characterized
in a number of ways via various theoretical results. In \citet{Zeevi1998}
and \citet{Jiang1999}, the single-output ($q=1$) soft-max gated
MoLE mean function was proved to be dense in an appropriate Sobolev
space, under assumptions on differentiability and measurability. We
define the notion of denseness in the manner of \citet[Ch. 22]{CheneyLight2000},
in the sequel. In \citet{Wang1992}, the single-output Gaussian gated
MoLE mean function was proved to be dense in the class of continuous
functions, using the Stone-Weierstrass theorem (cf. \citealp{Stone1948}).
Also via the Stone-Weierstrass theorem, \citet{Nguyen2016a} proved
that the single-output soft-max gated MoLE mean function is dense
in the class of continuous functions.

Distributional approximation theorems have also been obtained. For
example, \citet{Jiang1999a} proved that the class of single-output
soft-max gated MoLE models can approximate any conditional density
with mean function characterized via a ridge-type relationship with
the input vector (cf. \citealp{Pinkus2015}) to an arbitrary degree
of accuracy, with respect to the Hellinger distance and Kullback-Leibler
divergence (see \citealp[Ch. 3]{Pollard2002}). Replacing the linear
mean functions (\ref{eq: Individual Mean Function}) by polynomials,
\citet{MendesJiang2012} obtained an approximation result regarding
conditional densities with Sobolev class mean functions, instead of
ridge-type mean functions. 

We note that the results of \citet{Jiang1999a}, \citet{Jiang1999},
and \citet{MendesJiang2012} are more general than what has been discussed
here. That is, the results from the aforementioned papers extend to
various generalized linear MoE models, and are not restricted to the
MoLE context.

In a similar manner to \citet{Jiang1999a} and \citet{MendesJiang2012},
\citet{Norets2010} and \citet{Pelenis2014} showed that the single-output
soft-max gated MoLE models can approximate any conditional density,
regardless of mean function (under some regularity conditions), to
an arbitrary degree of accuracy, with respect to a Kullback-Leibler
type divergence. Extending upon the results of \citet{Norets2010}
and \citet{Pelenis2014}, \citet{Norets2014} proved that the same
approximation result holds for Gaussian gated MoLE models.

In recent years, numerous articles have described practical applications
of multi-output MoLE models ($q>1$; MO). For example, \citet{Chamroukhi2013a}
utilized such models for time series segmentation of human activity
data. An application of MO-MoLE models to the analyze genomics data
appears in \citet{Montuelle2014}. Such models have also been used
in image reconstruction and spectroscopic remote sensing applications
\citep{Deleforge2015}, as well as in sound source separation applications
\citep{Deleforge2015a}. Time series applications of MO-MoLE models
have been considered by \citet{Prado2006} and \citet{Kalliovirta2016}.

Unfortunately, the single-output approximation theorems that have
been previously cited no longer apply directly to MO-MoLE models.
This is because there is no currently available results that allow
for the pooling of marginal univariate effects, to best of our knowledge.
That is, one cannot simply assume that the individual modeling of
each output variable as an MoLE results in an MO-MoLE model when viewed
across all of the variables, simultaneously, to the best of our knowledge,
this current work is the first article to establish such a result
via Lemmas \ref{lem: closure under add} and \ref{lem: closure under mult}.

In this paper, we utilize the previous results of \citet{Wang1992}
and \citet{Norets2014} in order to state useful approximation theorems
to justify the use of MO-MoLE models for the analysis of functionally
complex data and those data that arise from complex distributions.
The approximation theorems regarding MO-MoLE models are presented
as Theorems \ref{thm: MO Norets} and \ref{thm: Denseness of Mean}.

Theorem \ref{thm: MO Norets} states that we can arbitrarily well
approximate the marginal conditional densities of any multivariate
regression data generating process (DGP) in the conditional Kullback-Leibler
divergence, provided we utilize an MO-MoLE model with a sufficiently
large but finite number of experts. Similarly, Theorem \ref{thm: Denseness of Mean}
states that we can utilize the mean function of an MO-MoLE model to
arbitrarily well estimate all output variables of a continuous multivariate
function over a compact support, simultaneously, given a sufficiently
large but finite number of experts in our model. Recently there has
been some interest in the use of deep variants of MO-MoLE models for
multivariate density estimation and functional approximation (see,
e.g., \citealp{Shazeer:2017aa}, \citealp{Fu:2018aa}, and \citealp{Zhao:2018aa}).
Our results provide empirical justification for the empirical effectiveness
in modeling complex multivariate data of these deep variants and the
already noted shallow counterparts that have been cited earlier.

In order to prove Theorems \ref{thm: MO Norets} and \ref{thm: Denseness of Mean},
we also proved a pair of technical lemmas regarding the combination
of univariate MoLE models. These lemmas are interesting in their own
rights, and are presented as Lemmas \ref{lem: closure under add}
and \ref{lem: closure under mult}.

To the best of our knowledge, the approximation capabilities of the
MO-MoLE models have not considered in previous articles on the topic.
This is because past works have primarily focused on the derivation
of estimation algorithms for MO-MoLE algorithms and the probabilistic
properties of the estimators of such models under various DGPs. The
assumption that MO-MoLE models would provide good approximations extrapolated
from works regarding the related class of finite mixture models (cf.
\citealp[Sec. 33.1]{DasGupta2008} and \citealp{Norets2012}). Our
results are the first available theorems that explain the empirical
effectiveness of MO-MoLE models in practice.

The rest of the paper is organized as follows. The univariate results
of \citet{Wang1992} and \citet{Norets2014} are presented in Section
2. The main results of the paper are stated in Section 3. Proofs of
the main theorems are provided in Section 4. Discussions and conclusions
are presented in Section 5. Supporting results are reported in the
Appendix.

\section{Preliminary results}

The approximation result of \citet{Norets2014} requires the following
setup. Suppose that we observe the data pair $\left(\bm{X},Y\right)\in\mathbb{X}\times\mathbb{Y}$,
where $\mathbb{Y}\subset\mathbb{R}$, is generated from a DGP that
can be characterized by a marginal PDF $g_{\bm{X}}\left(\bm{x}\right)$
and conditional PDF $g_{Y|\bm{X}}\left(y|\bm{x}\right)$. Let $G$
denote the joint probability measure that is implied by the joint
PDF $g_{Y|\bm{X}}\left(y|\bm{x}\right)g_{\bm{X}}\left(\bm{x}\right)$. 

Make the assumptions that: 
\begin{lyxlist}{00.00.0000}
\item [{{[}A1{]}}] $g_{Y|\bm{X}}\left(y|\bm{x}\right)$ is a continuous
function in both $\bm{x}\in\mathbb{X}$ and $y\in\mathbb{Y}$, almost
surely with respect to $G$, and 
\item [{{[}A2{]}}] there exists some $\rho>0$ such that
\[
\int_{\mathbb{X}\times\mathbb{Y}}\log\frac{g_{Y|\bm{X}}\left(y|\bm{x}\right)}{\inf_{\left\{ \left(\bm{s},t\right):\left\Vert \bm{x}-\bm{s}\right\Vert \le\rho,\left\Vert y-t\right\Vert \le\rho\right\} }g_{Y|\bm{X}}\left(t|\bm{s}\right)}\text{d}G\left(\bm{x},y\right)<\infty\text{,}
\]
where $\left\Vert \cdot\right\Vert $ is the Euclidean norm. 
\end{lyxlist}
As stated by \citet{Norets2014}, condition {[}A2{]} is a technical
requirement that the log relative changes in $g_{Y|\bm{X}}\left(y|\bm{x}\right)$
are finite, on average, and that $g_{Y|\bm{X}}\left(y|\bm{x}\right)$
is positive for all pairs of $\bm{x}$ and $y$. 

Write the class of MO-MoLE models with Gaussian gates and Gaussian
linear experts over $\mathbb{X}$ as
\[
\mathcal{L}_{q}\left(\mathbb{X}\right)=\left\{ f:f\left(\bm{y}|\bm{x};\bm{\theta}\right)=\sum_{z=1}^{n}\frac{\pi_{z}\phi_{p}\left(\bm{x};\bm{\mu}_{z},\bm{\Sigma}_{z}\right)\phi_{q}\left(\bm{y};\bm{a}_{z}+\mathbf{B}_{z}^{\top}\bm{x},\mathbf{C}_{z}\right)}{\sum_{\zeta=1}^{n}\pi_{\zeta}\phi_{p}\left(\bm{x};\bm{\mu}_{\zeta},\bm{\Sigma}_{\zeta}\right)},\text{ }n\in\mathbb{N}\right\} \text{.}
\]
Here, $\mathbf{C}_{z}\in\mathbb{R}^{q\times q}$ is a symmetric positive-definite
covariance matrix, for each $z\in\left[n\right]$. Furthermore, define
the subclass $\mathcal{L}_{q}^{*}\left(\mathbb{X}\right)\subset\mathcal{L}_{q}\left(\mathbb{X}\right)$,
where
\[
\mathcal{L}_{q}^{*}\left(\mathbb{X}\right)=\left\{ f:f\left(\bm{y}|\bm{x};\bm{\theta}\right)=\sum_{z=1}^{n}\frac{\pi_{z}\phi_{p}\left(\bm{x};\bm{\mu}_{z},\bm{\Sigma}_{z}\right)\phi_{q}\left(\bm{y};\bm{a}_{z},\mathbf{C}_{z}\right)}{\sum_{\zeta=1}^{n}\pi_{\zeta}\phi_{p}\left(\bm{x};\bm{\mu}_{\zeta},\bm{\Sigma}_{\zeta}\right)},\text{ }n\in\mathbb{N}\right\} \text{.}
\]
The following result is a direct consequence of \citet[Thm. 3.1]{Norets2014}.
\begin{thm}
\label{thm: Univariate case Norets}Let $\mathbb{X}$ be compact and
$\mathbb{Y}\subset\mathbb{R}$. If the data pair $\left(\bm{X},Y\right)$
arises from a DGP that is characterized by the joint probability measure
$G$, and if $g_{Y|\bm{X}}$ is a conditional PDF that satisfies {[}A1{]}
and {[}A2{]}, then for every $\epsilon>0$, there exist $n$ and $\bm{\theta}$
that characterize an MoLE model $f\in\mathcal{L}_{1}^{*}\left(\mathbb{X}\right)$,
such that
\[
\int_{\mathbb{X}\times\mathbb{Y}}\log\frac{g_{Y|\bm{X}}\left(y|\bm{x}\right)}{f\left(y|\bm{x};\bm{\theta}\right)}\text{d}G\left(\bm{x},y\right)<\epsilon\text{.}
\]
\end{thm}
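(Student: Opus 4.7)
My plan is to derive Theorem \ref{thm: Univariate case Norets} as a direct translation of Norets' (2014) Theorem 3.1. First I would observe that a generic element of $\mathcal{L}_1^*(\mathbb{X})$ is precisely the conditional density, given $\bm{X} = \bm{x}$, of a joint finite mixture on $\mathbb{X} \times \mathbb{Y}$ of the form
\[
h(\bm{x}, y) = \sum_{z=1}^{n} \pi_z \, \phi_p(\bm{x}; \bm{\mu}_z, \bm{\Sigma}_z) \, \phi_1(y; a_z, C_z),
\]
since dividing $h$ by its $\bm{x}$-marginal $\sum_{\zeta=1}^{n} \pi_\zeta \phi_p(\bm{x}; \bm{\mu}_\zeta, \bm{\Sigma}_\zeta)$ recovers the defining expression. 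Thus $\mathcal{L}_1^*(\mathbb{X})$ coincides with the class of joint Gaussian-location-mixture conditionals considered by Norets, and this identification is the substantive content of the reduction.

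The remainder of the proof is a matching exercise between hypotheses. Compactness of $\mathbb{X}$, the joint-continuity condition [A1], and the finite-log-oscillation condition [A2] are stated in Section~2 in the same form used by Norets, so no reformulation is required. Invoking his theorem then yields, for any $\epsilon > 0$, an integer $n$ and a parameter vector $\bm{\theta}$ such that the corresponding $f \in \mathcal{L}_1^*(\mathbb{X})$ satisfies
\[
\int_{\mathbb{X} \times \mathbb{Y}} \log \frac{g_{Y|\bm{X}}(y|\bm{x})}{f(y|\bm{x}; \bm{\theta})} \, \mathrm{d}G(\bm{x}, y) < \epsilon,
\]
which is the required conclusion.

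I expect no real obstacle: the approximation power is entirely inherited from Norets, and the only work is the notational identification of $\mathcal{L}_1^*(\mathbb{X})$ with the family he analyses. The one point I would make explicit is that the conclusion is stated in terms of the \emph{smaller} class $\mathcal{L}_1^*(\mathbb{X})$, whose experts have constant means $a_z$, rather than the full MoLE class $\mathcal{L}_1(\mathbb{X})$ whose experts carry the additional linear term $\mathbf{B}_z^\top \bm{x}$. Stating the result for the smaller class is strictly stronger and, as noted in the introduction, is the formulation needed for the multivariate extension to be carried out in Section~3.
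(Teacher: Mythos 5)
Your proposal matches the paper's treatment: the paper offers no separate proof of this theorem, stating only that it "is a direct consequence of [Thm.~3.1]" of Norets (2014), which is exactly the reduction you carry out, including the identification of $\mathcal{L}_{1}^{*}\left(\mathbb{X}\right)$ with the conditionals of joint Gaussian location mixtures and the matching of [A1]--[A2] to Norets' hypotheses. Your explicit remarks on that identification and on the result being stated for the smaller class $\mathcal{L}_{1}^{*}\left(\mathbb{X}\right)$ are correct and consistent with how the theorem is used later in the paper.
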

We now consider the approximation theorem of \citet{Wang1992}. Let
$\mathcal{C}\left(\mathbb{X}\right)$ denote the class of all continuous
functions with support $\mathbb{X}\subset\mathbb{R}^{p}$. For a pair
of single-output functions $u$ and $v$ on $\mathbb{X}$, we can
define the uniform distance between $u$ and $v$ as $\text{d}_{\infty}\left(u,v\right)=\left\Vert u-v\right\Vert _{\infty}$,
where $\left\Vert u\left(\bm{x}\right)\right\Vert _{\infty}=\sup_{\bm{x}\in\mathbb{X}}\left|u\left(\bm{x}\right)\right|$
is the uniform norm over the support $\mathbb{X}$.

The following definition is taken from \citet[Ch. 22]{CheneyLight2000}.
Suppose that $\mathcal{U}\left(\mathbb{X}\right)$ and $\mathcal{V}\left(\mathbb{X}\right)$
are two classes of functions on $\mathbb{X}$. If $\mathcal{U}$ and
$\mathcal{V}$ are normed vector spaces (with respect to an appropriate
norm), then we say that $\mathcal{U}$ is dense in $\mathcal{V}$,
if the closure of $\mathcal{U}$ is $\mathcal{V}$. That is, we say
that $\mathcal{U}$ is dense in $\mathcal{V}$ with respect to the
uniform norm, if for each $v\in\mathcal{V}$ and $\epsilon>0$, there
exists a linear combination $u\in\mathcal{U}$, such that $\text{d}_{\infty}\left(u,v\right)<\epsilon$.

For $\mathbb{Y}=\mathbb{R}^{q}$, denote the class of Gaussian gated
MoLE mean functions over the support $\mathbb{X}$ by
\[
\mathcal{M}_{q}\left(\mathbb{X}\right)=\left\{ \bm{m}:\bm{m}\left(\bm{x};\bm{\theta}\right)=\sum_{z=1}^{n}\frac{\pi_{z}\phi_{p}\left(\bm{x};\bm{\mu}_{z},\bm{\Sigma}_{z}\right)}{\sum_{\zeta=1}^{n}\pi_{\zeta}\phi_{p}\left(\bm{x};\bm{\mu}_{\zeta},\bm{\Sigma}_{\zeta}\right)}\left[\bm{a}_{z}+\mathbf{B}_{z}^{\top}\bm{x}\right]\text{, }\bm{x}\in\mathbb{X}\text{, }n\in\mathbb{N}\right\} \text{.}
\]
Further define the subclass $\mathcal{M}_{q}^{*}\left(\mathbb{X}\right)\subset\mathcal{M}_{q}\left(\mathbb{X}\right)$,
where
\[
\mathcal{M}_{q}^{*}\left(\mathbb{X}\right)=\left\{ \bm{m}\in\mathcal{M}_{q}\left(\mathbb{X}\right):\mathbf{B}_{z}=\mathbf{0}\text{, for each }z\in\left[n\right]\right\} \text{,}
\]
and\textbf{ }$\mathbf{0}$ is a matrix containing only zeros of appropriate
dimensionality. In the $q=1$ case, the following result was proved
by \citet{Wang1992}, using the Stone-Weierstrass theorem.
\begin{thm}
\label{thm: Wang}If $\mathbb{X}\subset\mathbb{R}^{p}$ is a compact
set, then the set $\mathcal{M}_{1}^{*}\left(\mathbb{X}\right)$ is
dense in $\mathcal{C}\left(\mathbb{X}\right)$, with respect to the
uniform norm. Subsequently, since $\mathcal{M}_{1}^{*}\left(\mathbb{X}\right)\subset\mathcal{M}_{1}\left(\mathbb{X}\right)$,
it follows that $\mathcal{M}_{1}\left(\mathbb{X}\right)$ is also
dense in $\mathcal{C}\left(\mathbb{X}\right)$, with respect to the
uniform norm.
\end{thm}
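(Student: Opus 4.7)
The plan is to invoke the Stone-Weierstrass theorem directly on $\mathcal{M}_{1}^{*}(\mathbb{X})$, verifying that, for compact $\mathbb{X}$, it is a point-separating subalgebra of $\mathcal{C}(\mathbb{X})$ containing the constants. Continuity of its members is immediate since each is a ratio of a nonnegative smooth function over a strictly positive smooth function. The constants lie in $\mathcal{M}_{1}^{*}(\mathbb{X})$ by taking $n=1$, any admissible $(\bm{\mu}_{1},\bm{\Sigma}_{1})$, $\pi_{1}=1$, and $a_{1}=c$; the defining ratio then collapses to $c$. Point separation is a short calculation: given $\bm{x}_{1}\neq\bm{x}_{2}$, the two-component member with $\bm{\mu}_{i}=\bm{x}_{i}$, $\bm{\Sigma}_{i}=\sigma^{2}\mathbf{I}$ for sufficiently small $\sigma$, $\pi_{1}=\pi_{2}=1/2$, and $(a_{1},a_{2})=(1,0)$ is arbitrarily close to $1$ at $\bm{x}_{1}$ and to $0$ at $\bm{x}_{2}$.

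The core step, and the one requiring the most care, is closure of $\mathcal{M}_{1}^{*}(\mathbb{X})$ under pointwise sums and products. This rests on the classical Gaussian product identity: for positive-definite $\bm{\Sigma}_{i},\bm{\Sigma}_{j}$ and any $\bm{\mu}_{i},\bm{\mu}_{j}$,
\[
\phi_{p}(\bm{x};\bm{\mu}_{i},\bm{\Sigma}_{i})\,\phi_{p}(\bm{x};\bm{\mu}_{j},\bm{\Sigma}_{j}) \;=\; C_{ij}\,\phi_{p}(\bm{x};\bm{\mu}_{ij},\bm{\Sigma}_{ij})
\]
for a positive constant $C_{ij}$ and parameters $(\bm{\mu}_{ij},\bm{\Sigma}_{ij})$ with $\bm{\Sigma}_{ij}^{-1}=\bm{\Sigma}_{i}^{-1}+\bm{\Sigma}_{j}^{-1}$ still positive-definite. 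For $m_{1},m_{2}\in\mathcal{M}_{1}^{*}(\mathbb{X})$ with respective parameter sets $\{(\pi_{i}^{(1)},\bm{\mu}_{i}^{(1)},\bm{\Sigma}_{i}^{(1)},a_{i}^{(1)})\}$ and $\{(\pi_{j}^{(2)},\bm{\mu}_{j}^{(2)},\bm{\Sigma}_{j}^{(2)},a_{j}^{(2)})\}$, placing $m_{1}+m_{2}$ over a common denominator and applying the identity to every cross term reproduces the functional form of $\mathcal{M}_{1}^{*}$, indexed by pairs $(i,j)$, with renormalized mixing weights $\tilde{\pi}_{ij}\propto\pi_{i}^{(1)}\pi_{j}^{(2)}C_{ij}$ (which are positive and sum to one), new mean-covariance pairs $(\bm{\mu}_{ij},\bm{\Sigma}_{ij})$, and linear coefficients $\tilde{a}_{ij}=a_{i}^{(1)}+a_{j}^{(2)}$. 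Repeating the same bookkeeping with $\tilde{a}_{ij}=a_{i}^{(1)}a_{j}^{(2)}$ yields $m_{1}m_{2}\in\mathcal{M}_{1}^{*}(\mathbb{X})$, and closure under real scalar multiplication is immediate by rescaling each $a_{z}$. The main obstacle is precisely this bookkeeping: one must confirm that the derived weights stay strictly positive, that the aggregated covariances remain positive-definite, and that numerator and denominator share a common Gaussian basis indexed by $(i,j)$, all of which follow transparently from the product identity.

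With these properties in place, Stone-Weierstrass gives that $\mathcal{M}_{1}^{*}(\mathbb{X})$ is uniformly dense in $\mathcal{C}(\mathbb{X})$. Because $\mathcal{M}_{1}^{*}(\mathbb{X})$ is itself closed under real linear combinations, its linear span coincides with $\mathcal{M}_{1}^{*}(\mathbb{X})$, so this density is exactly the fundamentality claim. The corollary for $\mathcal{M}_{1}(\mathbb{X})$ then follows from the inclusions $\mathcal{M}_{1}^{*}(\mathbb{X})\subset\mathcal{M}_{1}(\mathbb{X})\subset\mathcal{C}(\mathbb{X})$, since enlarging the generating set only enlarges the closure of its span.
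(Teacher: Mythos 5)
Your proof is correct and follows essentially the route the paper attributes to the cited source: the paper does not reprove Theorem \ref{thm: Wang} but notes it was established via the Stone--Weierstrass theorem, which is exactly your argument (constants, point separation, and algebra closure via the Gaussian product identity). Your closure-under-addition and closure-under-multiplication bookkeeping is the same manipulation the paper itself performs in Lemmas \ref{lem: closure under add} and \ref{lem: closure under mult} using Lemma \ref{lem: product}, so nothing further is needed.
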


\section{Main results}

Extending from the work of \citet{Norets2014}, we now consider the
approximation capabilities of MO-MoLE models. To do so, we require
the following definitions. 

Let $\mathbb{Y}=\prod_{j=1}^{q}\mathbb{Y}_{j}$, such that $\mathbb{Y}_{j}\subset\mathbb{R}$.
Suppose that the data pair $\left(\bm{X},\bm{Y}\right)\in\mathbb{X}\times\mathbb{Y}$
is generated from a DGP that can be characterized by a marginal PDF
$g_{\bm{X}}\left(\bm{x}\right)$ and that admits the univariate conditional
PDFs $g_{Y_{j}|\bm{X}}\left(y_{j}|\bm{x}\right)$, for each $j\in\left[q\right]$,
where $\bm{Y}^{\top}=\left(Y_{1},\dots,Y_{q}\right)$, and subsequently
$\bm{y}^{\top}=\left(y_{1},\dots,y_{q}\right)$. Let the probability
measure that is implied by the PDF $g_{Y_{j}|\bm{X}}\left(y_{j}|\bm{x}\right)g_{\bm{X}}\left(\bm{x}\right)$
be written as $G_{j}$, for each $j\in\left[q\right]$.

Make Assumptions {[}A1{]} and {[}A2{]} regarding each of the conditional
PDFs $g_{Y_{j}|\bm{X}}\left(y_{j}|\bm{x}\right)$. That is, assume
that:
\begin{lyxlist}{00.00.0000}
\item [{{[}B1{]}}] for each $j\in\left[q\right]$, $g_{Y_{j}|\bm{X}}\left(y_{j}|\bm{x}\right)$
is a continuous function in both $\bm{x}\in\mathbb{X}$ and $y_{j}\in\mathbb{Y}_{j}$,
almost surely with respect to $G_{j}$, and
\item [{{[}B2{]}}] for each $j\in\left[q\right]$, there exists some $\rho_{j}>0$
such that
\[
\int_{\mathbb{X}\times\mathbb{Y}_{j}}\log\frac{g_{Y_{j}|\bm{X}}\left(y|\bm{x}\right)}{\inf_{\left\{ \left(\bm{s},t\right):\left\Vert \bm{x}-\bm{s}\right\Vert \le\rho,\left\Vert y-t\right\Vert \le\rho\right\} }g_{Y_{j}|\bm{X}}\left(t|\bm{s}\right)}\text{d}G\left(\bm{x},y_{j}\right)<\infty\text{.}
\]
\end{lyxlist}
Using Theorem \ref{thm: Univariate case Norets}, we obtain the following
generalization regarding MO-MoLE models from the class $\mathcal{L}_{q}^{*}\left(\mathbb{X}\right)$,
and subsequently, the class $\mathcal{L}_{q}\left(\mathbb{X}\right)$.
The proof appears in Section 4.
\begin{thm}
\label{thm: MO Norets}Let $\mathbb{X}$ be compact and $\mathbb{Y}=\prod_{j=1}^{q}\mathbb{Y}_{j}$,
where $\mathbb{Y}_{j}\subset\mathbb{R}$. Assume that the DGP of $\left(\bm{X},\bm{Y}\right)$
is compatible with each of the joint probability measures $G_{j}$
($j\in\left[q\right]$). If the conditional PDFs $g_{Y_{j}|\bm{X}}$
($j\in\left[q\right]$) are such that Assumptions {[}B1{]} and {[}B2{]}
are satisfied, then there exist $n$ and $\bm{\theta}$ that characterize
an MoLE model $f\in\mathcal{L}_{q}\left(\mathbb{X}\right)$, such
that for some $\epsilon>0$,
\[
\int_{\mathbb{X}\times\mathbb{Y}_{j}}\log\frac{g_{Y_{j}|\bm{X}}\left(y_{j}|\bm{x}\right)}{f\left(y_{j}|\bm{x};\bm{\theta}\right)}\text{d}G_{j}\left(\bm{x},y_{j}\right)<\epsilon
\]
is satisfied simultaneously for all $j\in\left[q\right]$.
\end{thm}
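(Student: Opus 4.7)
The plan is to reduce the multi-output statement to $q$ separate applications of Theorem \ref{thm: Univariate case Norets} and then glue the univariate approximants together by taking their product in $\bm{y}$, exploiting the fact that the resulting product factorises exactly on marginals and that a product of Gaussian PDFs in the same argument is proportional to a Gaussian PDF.

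More precisely, I would fix $\epsilon>0$ and, for each $j\in[q]$, apply Theorem \ref{thm: Univariate case Norets} to the univariate conditional $g_{Y_{j}|\bm{X}}$ (which satisfies [A1]--[A2] by [B1]--[B2]). This yields $n_{j}\in\mathbb{N}$ and parameters giving a univariate model
\[
f_{j}(y_{j}|\bm{x})=\sum_{z=1}^{n_{j}}\frac{\pi_{z}^{(j)}\phi_{p}(\bm{x};\bm{\mu}_{z}^{(j)},\bm{\Sigma}_{z}^{(j)})}{\sum_{\zeta}\pi_{\zeta}^{(j)}\phi_{p}(\bm{x};\bm{\mu}_{\zeta}^{(j)},\bm{\Sigma}_{\zeta}^{(j)})}\,\phi_{1}(y_{j};a_{z}^{(j)},c_{z}^{(j)})\in\mathcal{L}_{1}^{*}(\mathbb{X}),
\]
such that $\int\log(g_{Y_{j}|\bm{X}}/f_{j})\,\mathrm{d}G_{j}<\epsilon$. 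I then define the candidate $q$-output model by the product $f(\bm{y}|\bm{x})=\prod_{j=1}^{q}f_{j}(y_{j}|\bm{x})$.

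The main (only) technical step is to show $f\in\mathcal{L}_{q}(\mathbb{X})$. Distributing the product over the $q$ sums produces a single sum indexed by multi-indices $\bm{z}=(z_{1},\dots,z_{q})\in[n_{1}]\times\cdots\times[n_{q}]$. In each summand, the $\bm{y}$-factor is $\prod_{j}\phi_{1}(y_{j};a_{z_{j}}^{(j)},c_{z_{j}}^{(j)})=\phi_{q}(\bm{y};\bm{a}_{\bm{z}},\mathbf{C}_{\bm{z}})$ with $\bm{a}_{\bm{z}}=(a_{z_{1}}^{(1)},\dots,a_{z_{q}}^{(q)})^{\top}$ and $\mathbf{C}_{\bm{z}}=\mathrm{diag}(c_{z_{1}}^{(1)},\dots,c_{z_{q}}^{(q)})$; for the gating factor, both numerator and denominator are products of Gaussian densities in $\bm{x}$, and I would invoke the standard identity
\[
\prod_{j=1}^{q}\phi_{p}(\bm{x};\bm{\mu}_{j},\bm{\Sigma}_{j})=K(\bm{\mu}_{1},\bm{\Sigma}_{1},\dots)\;\phi_{p}(\bm{x};\bm{\mu}^{*},\bm{\Sigma}^{*}),
\]
with $\bm{\Sigma}^{*}=(\sum_{j}\bm{\Sigma}_{j}^{-1})^{-1}$, $\bm{\mu}^{*}=\bm{\Sigma}^{*}\sum_{j}\bm{\Sigma}_{j}^{-1}\bm{\mu}_{j}$, and $K$ independent of $\bm{x}$. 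Applying this to every numerator and every denominator term, the gating factor becomes $C_{\bm{z}}\phi_{p}(\bm{x};\bm{\mu}_{\bm{z}}^{*},\bm{\Sigma}_{\bm{z}}^{*})/\sum_{\bm{\zeta}}C_{\bm{\zeta}}\phi_{p}(\bm{x};\bm{\mu}_{\bm{\zeta}}^{*},\bm{\Sigma}_{\bm{\zeta}}^{*})$ with positive $\bm{x}$-independent constants $C_{\bm{z}}$. Renormalising by setting $\pi_{\bm{z}}=C_{\bm{z}}/\sum_{\bm{\zeta}}C_{\bm{\zeta}}$, one sees that $f$ has exactly the form required by $\mathcal{L}_{q}^{*}(\mathbb{X})\subset\mathcal{L}_{q}(\mathbb{X})$, with the multi-index $\bm{z}$ playing the role of the single expert label.

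Finally, because $f$ is literally a product of conditional densities in disjoint coordinates, its $j$-th marginal conditional satisfies $\int f(\bm{y}|\bm{x})\,\mathrm{d}\bm{y}_{-j}=f_{j}(y_{j}|\bm{x})\prod_{k\neq j}\int f_{k}(y_{k}|\bm{x})\,\mathrm{d}y_{k}=f_{j}(y_{j}|\bm{x})$, so the bound delivered by Theorem \ref{thm: Univariate case Norets} for each $f_{j}$ transfers verbatim to $f$ and holds simultaneously for all $j\in[q]$. The only real obstacle is bookkeeping in the product-of-Gaussians collapse in step 3; everything else is an immediate consequence of the univariate theorem and of the factorised choice of $f$.
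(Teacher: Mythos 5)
Your proposal follows essentially the same route as the paper: apply the univariate Theorem \ref{thm: Univariate case Norets} coordinate-wise, take the product of the resulting densities, and show the product lies in $\mathcal{L}_{q}^{*}\left(\mathbb{X}\right)\subset\mathcal{L}_{q}\left(\mathbb{X}\right)$ by collapsing the products of gating Gaussians via the product-of-Gaussians identity (Lemma \ref{lem: product}) and assembling the univariate experts into a block-diagonal multivariate Gaussian, which is exactly what the paper does through repeated application of Lemma \ref{lem: closure under mult}, except that you handle the $q$-fold product in one multi-index step rather than iterating a binary closure lemma. Your explicit remark that the $j$th marginal of the product equals $f_{j}$, so the univariate bounds transfer verbatim, is a correct and slightly more careful treatment of a step the paper leaves implicit.
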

We now extend upon the result of \citet{Wang1992} in order to state
a theorem regarding the approximation capabilities of MO-MoLE mean
functions. Define the space of MO continuous functions over $\mathbb{X}$
as
\[
\mathcal{C}_{q}\left(\mathbb{X}\right)=\left\{ \bm{m}^{\top}\left(\bm{x}\right)=\left(m_{1}\left(\bm{x}\right),\dots,m_{q}\left(\bm{x}\right)\right):m_{j}\in\mathcal{C}\left(\mathbb{X}\right),j\in\left[q\right]\right\} \text{.}
\]
We wish to determine the relationship between the class $\mathcal{M}_{q}\left(\mathbb{X}\right)$
and $\mathcal{C}_{q}\left(\mathbb{X}\right)$, for $q>1$. In order
to state such a relationship, we require an appropriate distance function.
Following the approach of \citet{Chiou2014}, we utilize summation
to induce a multivariate norm and distance function as follows. Let
$\bm{u}^{\top}=\left(u_{1},\dots,u_{q}\right)$ and $\bm{v}^{\top}=\left(v_{1},\dots,v_{q}\right)$
be a pair of MO functions on $\mathbb{X}$. Denote the induced distance
between $\bm{u}$ and $\bm{v}$ by $\text{d}_{q,\infty}\left(\bm{u},\bm{v}\right)=\left\Vert \bm{u}-\bm{v}\right\Vert _{q,\infty}$,
where $\left\Vert \bm{u}\left(\bm{x}\right)\right\Vert _{q,\infty}=\sum_{j=1}^{q}\left\Vert u_{j}\left(\bm{x}\right)\right\Vert _{\infty}$. 

We prove that the operator $\left\Vert \cdot\right\Vert _{q,\infty}$
satisfies the definition of a norm in the Appendix. Our following
result generalizes Theorem \ref{thm: Wang}. The proof appears in
Section 4.
\begin{thm}
\label{thm: Denseness of Mean}If $\mathbb{X}\subset\mathbb{R}^{p}$
is a compact set and $q\in\mathbb{N}$, then the sets of MO-MoLE mean
functions $\mathcal{M}_{q}^{*}\left(\mathbb{X}\right)$ and $\mathcal{M}_{q}\left(\mathbb{X}\right)$
are dense in $\mathcal{C}_{q}\left(\mathbb{X}\right)$, with respect
to the induced norm. 
\end{thm}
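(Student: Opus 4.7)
The plan is to reduce to the scalar case (Theorem~\ref{thm: Wang}) by approximating each output coordinate separately and then lifting the scalar approximants into $\mathcal{M}_q^*(\mathbb{X})$ via multiplication by standard basis vectors of $\mathbb{R}^q$. First, I would fix $\bm{m}^\top=(m_1,\dots,m_q)\in\mathcal{C}_q(\mathbb{X})$ and $\epsilon>0$. Since each $m_j\in\mathcal{C}(\mathbb{X})$, Theorem~\ref{thm: Wang} applied to $m_j$ with tolerance $\epsilon/q$ furnishes a linear combination $\tilde{m}_j=\sum_{k=1}^{K_j}c_k^{(j)}u_k^{(j)}$ with $u_k^{(j)}\in\mathcal{M}_1^*(\mathbb{X})$ and $c_k^{(j)}\in\mathbb{R}$, satisfying $\|m_j-\tilde{m}_j\|_\infty<\epsilon/q$.

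The key observation is the following scalar-to-vector lifting: for any $u\in\mathcal{M}_1^*(\mathbb{X})$ and any $j\in[q]$, the vector-valued function $\bm{x}\mapsto u(\bm{x})\bm{e}_j$ belongs to $\mathcal{M}_q^*(\mathbb{X})$, where $\bm{e}_j$ denotes the $j$-th standard basis vector of $\mathbb{R}^q$. Writing $u(\bm{x})=\sum_{z=1}^{n}[\pi_z\phi_p(\bm{x};\bm{\mu}_z,\bm{\Sigma}_z)/\sum_{\zeta=1}^{n}\pi_\zeta\phi_p(\bm{x};\bm{\mu}_\zeta,\bm{\Sigma}_\zeta)]\,a_z$, this is immediate upon taking the (unconstrained) intercept vector $\bm{a}_z=a_z\bm{e}_j\in\mathbb{R}^q$ in the definition of $\mathcal{M}_q^*(\mathbb{X})$. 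Using this lifting, I would form
\[
\tilde{\bm{m}}(\bm{x})\;=\;\sum_{j=1}^{q}\sum_{k=1}^{K_j}c_k^{(j)}\,u_k^{(j)}(\bm{x})\,\bm{e}_j,
\]
which is a finite linear combination of elements of $\mathcal{M}_q^*(\mathbb{X})$ with real scalar coefficients; by construction, its $j$-th coordinate equals $\tilde{m}_j$, since every lift $u_k^{(i)}\bm{e}_i$ with $i\neq j$ contributes zero in coordinate $j$.

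To close the argument, I would apply the definition of the induced norm to obtain
\[
\|\bm{m}-\tilde{\bm{m}}\|_{q,\infty}\;=\;\sum_{j=1}^{q}\|m_j-\tilde{m}_j\|_\infty\;<\;q\cdot\frac{\epsilon}{q}\;=\;\epsilon,
\]
which shows that $\mathcal{M}_q^*(\mathbb{X})$ is fundamental in $\mathcal{C}_q(\mathbb{X})$ with respect to $\|\cdot\|_{q,\infty}$. The statement for the larger class $\mathcal{M}_q(\mathbb{X})$ then follows immediately from the inclusion $\mathcal{M}_q^*(\mathbb{X})\subset\mathcal{M}_q(\mathbb{X})$, since enlarging the generating set can only enlarge the closure of its span.

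The main \emph{obstacle} is conceptually modest and comes down to bookkeeping: one must verify that the scalar-to-vector lifting genuinely lands inside $\mathcal{M}_q^*(\mathbb{X})$, which crucially exploits the fact that the intercept vectors $\bm{a}_z$ are unconstrained elements of $\mathbb{R}^q$ in the definition of that class (so that the same gating parameters from the scalar approximation can be reused in the vector setting). One also needs that $\|\cdot\|_{q,\infty}$ is in fact a norm so that the fundamental-set definition applies, but this is precisely the verification the authors defer to the Appendix and follows routinely from the corresponding properties of $\|\cdot\|_\infty$ on the scalar factors.
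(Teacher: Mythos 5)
Your proposal is correct, and its core is the same as the paper's: approximate each coordinate $u_{j}$ by a scalar Gaussian-gated mean function via Theorem \ref{thm: Wang} with budget $\epsilon/q$, lift to $\mathbb{R}^{q}$ by multiplying with the basis vector $\mathbf{e}_{j}$ (exploiting that the intercepts $\bm{a}_{z}$ are unconstrained and $\mathbf{B}_{z}=\mathbf{0}$), sum over $j$, and bound the induced distance by $\sum_{j=1}^{q}\left\Vert u_{j}-\tilde{m}_{j}\right\Vert _{\infty}<\epsilon$. The one genuine difference is the final membership step: you stop at a finite linear combination of elements of $\mathcal{M}_{q}^{*}\left(\mathbb{X}\right)$, which is all that the definition of ``fundamental'' (closure of the span) demands, and thereby bypass any closure lemma; the paper instead invokes Lemma \ref{lem: closure under add} (built on the Gaussian product identity, Lemma \ref{lem: product}) to show that the coordinate-wise sum is itself a single element of $\mathcal{M}_{q}^{*}\left(\mathbb{X}\right)$, with $n=\prod_{j}n_{j}$ experts after re-indexing. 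Your route is leaner and logically sufficient; the paper's route buys a strictly stronger conclusion, namely that the set $\mathcal{M}_{q}^{*}\left(\mathbb{X}\right)$ itself (not merely its span) is dense, i.e.\ the approximant is realizable as one Gaussian-gated MoLE mean function, which is the form one would actually fit in practice. Note also that your handling of Theorem \ref{thm: Wang} is slightly more careful than the paper's: that theorem a priori yields a linear combination of elements of $\mathcal{M}_{1}^{*}\left(\mathbb{X}\right)$, and the paper's tacit replacement of it by a single element implicitly uses the same closure-under-addition fact, whereas you simply carry the coefficients $c_{k}^{(j)}$ along.
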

We note that both Theorems \ref{thm: Denseness of Mean} and \ref{thm: MO Norets}
require that the gating functions are of the Gaussian form, given
by (\ref{eq: Gaussian gate}). We note that \citet[Thm. 1]{Nguyen2016a}
provides a version of Theorem \ref{thm: Wang} that utilizes the soft-max
gating function instead of the Gaussian gating function, under the
same compactness assumption on $\mathbb{X}$. Similarly, \citet[Thm. 1]{Pelenis2014}
provides a substitute for Theorem \ref{thm: Univariate case Norets},
under almost identical assumptions, for soft-max gated MoLEs with
Gaussian linear experts. An additional assumption that $\int_{\mathbb{Y}}y^{2}g_{Y|\bm{X}}\left(y|\bm{x}\right)\text{d}\bm{x}<\infty$
for all $\bm{x}\in\mathbb{X}$ is required, in order to apply the
result of \citet{Pelenis2014}. Thus, one can largely replace the
Gaussian gating functions in Theorems \ref{thm: Denseness of Mean}
and \ref{thm: MO Norets} by the soft-max gating functions of form
(\ref{eq: Softmax gate}), and still obtain the conclusions of the
two results.

Theorems \ref{thm: MO Norets} and \ref{thm: Denseness of Mean} are
directly applicable to the MO-MoLE models that are considered in \citet{Prado2006},
\citet{Chamroukhi2013a}, \citet{Montuelle2014}, \citep{Deleforge2015},
\citep{Deleforge2015a}, and \citet{Kalliovirta2016}. For example,
the MO-MoLE models of \citet{Chamroukhi2013a} and \citealp{Deleforge2015}
take the forms:
\[
f\left(\bm{y}|\bm{x};\bm{\theta}\right)=\sum_{z=1}^{n}\frac{\exp\left(c_{z}+\bm{d}_{z}^{\top}\bm{x}\right)}{\sum_{\zeta=1}^{n}\exp\left(c_{\zeta}+\bm{d}_{\zeta}^{\top}\bm{x}\right)}\phi_{q}\left(\bm{y};\bm{a}_{z}+\mathbf{B}_{z}^{\top}\bm{x},\bm{\Omega}_{z}\right)
\]
and

\[
f\left(\bm{y}|\bm{x};\bm{\theta}\right)=\sum_{z=1}^{n}\frac{\pi_{z}\phi_{p}\left(\bm{x};\bm{\mu}_{z},\bm{\Sigma}_{z}\right)}{\sum_{\zeta=1}^{n}\pi_{\zeta}\phi_{p}\left(\bm{x};\bm{\mu}_{\zeta},\bm{\Sigma}_{\zeta}\right)}\phi_{q}\left(\bm{y};\bm{a}_{z}+\mathbf{B}_{z}^{\top}\bm{x},\bm{\Omega}_{z}\right)\text{,}
\]
where $\bm{\Omega}_{z}$ is a positive definite and symmetric matrix,
for each $z\in\left[n\right]$. Thus both MO-MoLE models satisfy the
assumptions of \ref{thm: MO Norets} and \ref{thm: Denseness of Mean}.
We can therefore conclude that with sufficiently many experts $n$,
both models are able to arbitrarily well approximate mean functions
and conditional marginal density functions of the underlying DGPs.
This therefore explains why these models, and the other cited MO-MoLE
models are able to well approximate their target functions in the
respective articles.

The distributional approximation and denseness results provide some
theoretical justification for the flexibility and goodness-of-fit
of such models in the simulation studies and applications that are
presented in the listed references. Furthermore, we note that the
results are directly applicable to any class of MoLE models with gating
functions that includes Gaussian as a subclass. For example, it is
hypothetically possible to construct a family of skew normal gated
MoLE models using the skew normal distributions of \citet{Azzalini1996},
where the skew normal density function replaces the Gaussian density
function in (\ref{eq: Gaussian gate}). Since the skew normal distribution
includes the Gaussian distribution as a special case, the results
of our theorems would immediately apply to such a construction.

\section{Proofs of main results}

The following lemmas streamline the proofs of Theorems \ref{thm: Denseness of Mean}
and \ref{thm: MO Norets}. The first lemma is well known and characterizes
the functional form of the product of two Gaussian PDFs. A proof of
the lemma can be found in \citet{Bromiley2014}. The proofs of Lemmas
\ref{lem: closure under add} and \ref{lem: closure under mult} appear
in the Appendix.
\begin{lem}
\label{lem: product}If $\bm{\mu}_{1},\bm{\mu}_{2}\in\mathbb{R}^{p}$
and $\bm{\Sigma}_{1},\bm{\Sigma}_{2}\in\mathbb{R}^{p\times p}$ are
symmetric positive-definite covariance matrices, then
\[
\phi_{p}\left(\bm{x};\bm{\mu}_{1},\bm{\Sigma}_{1}\right)\phi_{p}\left(\bm{x};\bm{\mu}_{2},\bm{\Sigma}_{2}\right)=c\phi_{p}\left(\bm{x};\bm{\mu}_{12},\bm{\Sigma}_{12}\right)\text{,}
\]
where $c>0$, $\bm{\Sigma}_{12}^{-1}=\bm{\Sigma}_{1}^{-1}+\bm{\Sigma}_{2}^{-1}$,
and $\bm{\mu}_{12}=\bm{\Sigma}_{12}\left(\bm{\Sigma}_{1}^{-1}\bm{\mu}_{1}+\bm{\Sigma}_{2}^{-1}\bm{\mu}_{2}\right)$.
\end{lem}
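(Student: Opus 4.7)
The plan is to proceed by direct algebraic manipulation of the two Gaussian densities, matching the exponent of the product with the target Gaussian form and absorbing leftover factors into the constant $c$. First I would write out
\[
\phi_p(\bm{x};\bm{\mu}_1,\bm{\Sigma}_1)\,\phi_p(\bm{x};\bm{\mu}_2,\bm{\Sigma}_2) = |2\pi\bm{\Sigma}_1|^{-1/2}|2\pi\bm{\Sigma}_2|^{-1/2}\exp\left[-\tfrac{1}{2}Q(\bm{x})\right],
\]
where $Q(\bm{x}) = (\bm{x}-\bm{\mu}_1)^\top\bm{\Sigma}_1^{-1}(\bm{x}-\bm{\mu}_1) + (\bm{x}-\bm{\mu}_2)^\top\bm{\Sigma}_2^{-1}(\bm{x}-\bm{\mu}_2)$, and reduce the identity to showing that $Q(\bm{x})$ equals $(\bm{x}-\bm{\mu}_{12})^\top\bm{\Sigma}_{12}^{-1}(\bm{x}-\bm{\mu}_{12}) + K$ for some scalar $K$ that does not depend on $\bm{x}$.

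Next I would expand $Q(\bm{x})$ and collect the quadratic, linear, and constant terms in $\bm{x}$. The quadratic-in-$\bm{x}$ term is $\bm{x}^\top(\bm{\Sigma}_1^{-1}+\bm{\Sigma}_2^{-1})\bm{x}$, which matches $\bm{x}^\top\bm{\Sigma}_{12}^{-1}\bm{x}$ by the definition of $\bm{\Sigma}_{12}$; I should note here that $\bm{\Sigma}_1^{-1}+\bm{\Sigma}_2^{-1}$ is symmetric positive-definite (being a sum of two such matrices), so $\bm{\Sigma}_{12}$ is well-defined and symmetric positive-definite. The linear-in-$\bm{x}$ term of $Q(\bm{x})$ is $-2\bm{x}^\top(\bm{\Sigma}_1^{-1}\bm{\mu}_1+\bm{\Sigma}_2^{-1}\bm{\mu}_2)$, while the linear term produced by $(\bm{x}-\bm{\mu}_{12})^\top\bm{\Sigma}_{12}^{-1}(\bm{x}-\bm{\mu}_{12})$ is $-2\bm{x}^\top\bm{\Sigma}_{12}^{-1}\bm{\mu}_{12}$. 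Equating these coefficients uniquely forces $\bm{\mu}_{12} = \bm{\Sigma}_{12}(\bm{\Sigma}_1^{-1}\bm{\mu}_1+\bm{\Sigma}_2^{-1}\bm{\mu}_2)$, as stated.

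Having pinned down $\bm{\Sigma}_{12}$ and $\bm{\mu}_{12}$, the remaining work is bookkeeping of constants. I would define
\[
c = |2\pi\bm{\Sigma}_1|^{-1/2}|2\pi\bm{\Sigma}_2|^{-1/2}|2\pi\bm{\Sigma}_{12}|^{1/2}\exp\left[-\tfrac{1}{2}K\right],
\]
where $K = \bm{\mu}_1^\top\bm{\Sigma}_1^{-1}\bm{\mu}_1 + \bm{\mu}_2^\top\bm{\Sigma}_2^{-1}\bm{\mu}_2 - \bm{\mu}_{12}^\top\bm{\Sigma}_{12}^{-1}\bm{\mu}_{12}$ comes from matching the constant terms. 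Multiplying and dividing by $|2\pi\bm{\Sigma}_{12}|^{1/2}$ then rearranges the product into $c\,\phi_p(\bm{x};\bm{\mu}_{12},\bm{\Sigma}_{12})$. Positivity of $c$ is immediate since every factor is strictly positive: determinants of positive-definite matrices are positive, and the exponential is positive.

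The main obstacle, if any, is purely notational — tracking the completion of the square for matrix-valued quadratic forms without sign or transpose errors, and justifying invertibility of $\bm{\Sigma}_1^{-1}+\bm{\Sigma}_2^{-1}$. There is no analytic difficulty: the lemma is an identity verified by matching polynomial coefficients in $\bm{x}$ inside the exponent.
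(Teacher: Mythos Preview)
Your proof is correct and is the standard completion-of-the-square argument for this identity. The paper does not actually prove this lemma itself; it simply states that the result is well known and cites \cite{Bromiley2014} for a proof, whose derivation is essentially the one you give.
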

\begin{lem}
\label{lem: closure under add}If $\bm{m}_{\left[1\right]},\bm{m}_{\left[2\right]}\in\mathcal{M}_{q}^{*}\left(\mathbb{X}\right)$,
for some $\mathbb{X}$, then $\bm{m}_{\left[12\right]}\in\mathcal{M}_{q}^{*}\left(\mathbb{X}\right)$,
where $\bm{m}_{\left[12\right]}=\bm{m}_{\left[1\right]}+\bm{m}_{\left[2\right]}$.
\end{lem}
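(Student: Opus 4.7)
The plan is to take two arbitrary members $\bm{m}_{[1]}$ and $\bm{m}_{[2]}$ of $\mathcal{M}_{q}^{*}\left(\mathbb{X}\right)$, with $n_{1}$ and $n_{2}$ components respectively, and exhibit $\bm{m}_{[12]}=\bm{m}_{[1]}+\bm{m}_{[2]}$ as an element of $\mathcal{M}_{q}^{*}\left(\mathbb{X}\right)$ having $n_{1}n_{2}$ components. The central observation is that the product of the two gating denominators, each a positive combination of Gaussian PDFs, expands via Lemma \ref{lem: product} into another positive combination of Gaussian PDFs, and the same pairwise expansion simultaneously handles the two cross-terms that appear in the numerator after placing $\bm{m}_{[1]}+\bm{m}_{[2]}$ over a common denominator.

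Concretely, I would first write each $\bm{m}_{[k]}\left(\bm{x}\right)=D_{k}\left(\bm{x}\right)/N_{k}\left(\bm{x}\right)$, where $N_{k}\left(\bm{x}\right)=\sum_{i=1}^{n_{k}}\pi_{i}^{(k)}\phi_{p}\bigl(\bm{x};\bm{\mu}_{i}^{(k)},\bm{\Sigma}_{i}^{(k)}\bigr)$ is scalar and $D_{k}\left(\bm{x}\right)=\sum_{i=1}^{n_{k}}\pi_{i}^{(k)}\phi_{p}\bigl(\bm{x};\bm{\mu}_{i}^{(k)},\bm{\Sigma}_{i}^{(k)}\bigr)\bm{a}_{i}^{(k)}$ is vector valued. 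Combining over the common denominator yields
\[
\bm{m}_{[12]}\left(\bm{x}\right)=\frac{D_{1}\left(\bm{x}\right)N_{2}\left(\bm{x}\right)+N_{1}\left(\bm{x}\right)D_{2}\left(\bm{x}\right)}{N_{1}\left(\bm{x}\right)N_{2}\left(\bm{x}\right)}.
\]
Applying Lemma \ref{lem: product} termwise gives $\phi_{p}\bigl(\bm{x};\bm{\mu}_{i}^{(1)},\bm{\Sigma}_{i}^{(1)}\bigr)\phi_{p}\bigl(\bm{x};\bm{\mu}_{j}^{(2)},\bm{\Sigma}_{j}^{(2)}\bigr)=c_{ij}\phi_{p}\left(\bm{x};\bm{\mu}_{ij},\bm{\Sigma}_{ij}\right)$ for explicit positive constants $c_{ij}$ and parameters $\bm{\mu}_{ij},\bm{\Sigma}_{ij}$ supplied by the lemma, and the identical constants $c_{ij}$ appear in the expansions of $N_{1}N_{2}$, $D_{1}N_{2}$, and $N_{1}D_{2}$. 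Setting $Z=\sum_{i,j}c_{ij}\pi_{i}^{(1)}\pi_{j}^{(2)}$, $\tilde{\pi}_{ij}=c_{ij}\pi_{i}^{(1)}\pi_{j}^{(2)}/Z$, and $\tilde{\bm{a}}_{ij}=\bm{a}_{i}^{(1)}+\bm{a}_{j}^{(2)}$, the weights $\tilde{\pi}_{ij}$ are positive and sum to one, and dividing numerator and denominator by $Z$ places $\bm{m}_{[12]}$ into the canonical form defining $\mathcal{M}_{q}^{*}\left(\mathbb{X}\right)$, with $n_{1}n_{2}$ Gaussian gating components and all coefficient matrices set to zero.

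I do not anticipate a conceptual obstacle here; the main item requiring care is verifying that the constants $c_{ij}$ furnished by Lemma \ref{lem: product} are the same in the expansions of the denominator $N_{1}N_{2}$ and of the two numerator cross-terms, so that they factor through both and leave a legitimate family of gating weights after normalisation. Beyond that, the argument is purely algebraic bookkeeping.
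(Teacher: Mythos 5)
Your proposal is correct and follows essentially the same route as the paper's own proof: place $\bm{m}_{[1]}+\bm{m}_{[2]}$ over the common denominator $N_{1}N_{2}$, expand the cross-terms, apply Lemma \ref{lem: product} pairwise (your $c_{ij}$, $\bm{\mu}_{ij}$, $\bm{\Sigma}_{ij}$ are exactly the paper's $c_{st}$, $\bm{\mu}_{(st)}$, $\bm{\Sigma}_{(st)}$, and your $\tilde{\bm{a}}_{ij}=\bm{a}_{i}^{(1)}+\bm{a}_{j}^{(2)}$ matches $\bm{a}_{(st)}$), then renormalise the weights and re-index the $n_{1}n_{2}$ pairs. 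The point you flag for care is indeed fine, since each $c_{ij}$ depends only on the Gaussian parameters and not on $\bm{x}$, so it factors identically through $N_{1}N_{2}$, $D_{1}N_{2}$, and $N_{1}D_{2}$.
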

\begin{lem}
\label{lem: closure under mult} If $f_{\left[1\right]}\in\mathcal{L}_{q}^{*}\left(\mathbb{X}\right)$
and $f_{\left[2\right]}\in\mathcal{L}_{r}^{*}\left(\mathbb{X}\right)$,
for some $\mathbb{X}$ ($q,r\in\mathbb{N}$), then $f_{\left[12\right]}\in\mathcal{L}_{q+r}^{*}\left(\mathbb{X}\right)$,
where $f_{\left[12\right]}=f_{\left[1\right]}f_{\left[2\right]}$.
\end{lem}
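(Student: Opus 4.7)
The plan is to write $f_{[12]} = f_{[1]} f_{[2]}$ as an explicit element of $\mathcal{L}_{q+r}^{*}\left(\mathbb{X}\right)$ by re-indexing the product over a pair of experts $\left(z_{1},z_{2}\right)$ drawn from the two mixtures, and then using Lemma \ref{lem: product} to collapse every Gaussian-in-$\bm{x}$ product into a single Gaussian. Concretely, I would first denote the component sizes by $n_{1}$ and $n_{2}$, and expand the numerator of $f_{[1]}\left(\bm{y}_{1}|\bm{x}\right)f_{[2]}\left(\bm{y}_{2}|\bm{x}\right)$ as a sum over $\left(z_{1},z_{2}\right)\in\left[n_{1}\right]\times\left[n_{2}\right]$. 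Each summand then carries a factor $\phi_{p}\left(\bm{x};\bm{\mu}_{z_{1}}^{(1)},\bm{\Sigma}_{z_{1}}^{(1)}\right)\phi_{p}\left(\bm{x};\bm{\mu}_{z_{2}}^{(2)},\bm{\Sigma}_{z_{2}}^{(2)}\right)$, which Lemma \ref{lem: product} rewrites as $c_{z_{1}z_{2}}\phi_{p}\left(\bm{x};\tilde{\bm{\mu}}_{z_{1}z_{2}},\tilde{\bm{\Sigma}}_{z_{1}z_{2}}\right)$ for explicit constants and covariance matrices. An entirely analogous treatment applies to the product of denominators.

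Next, I would combine the two expert densities $\phi_{q}\left(\bm{y}_{1};\bm{a}_{z_{1}}^{(1)},\mathbf{C}_{z_{1}}^{(1)}\right)\phi_{r}\left(\bm{y}_{2};\bm{a}_{z_{2}}^{(2)},\mathbf{C}_{z_{2}}^{(2)}\right)$ into a single $\left(q+r\right)$-dimensional Gaussian $\phi_{q+r}\left(\bm{y};\tilde{\bm{a}}_{z_{1}z_{2}},\tilde{\mathbf{C}}_{z_{1}z_{2}}\right)$, where $\bm{y}^{\top}=\left(\bm{y}_{1}^{\top},\bm{y}_{2}^{\top}\right)$, $\tilde{\bm{a}}_{z_{1}z_{2}}^{\top}=\left(\bm{a}_{z_{1}}^{(1)\top},\bm{a}_{z_{2}}^{(2)\top}\right)$, and $\tilde{\mathbf{C}}_{z_{1}z_{2}}$ is the block-diagonal matrix with diagonal blocks $\mathbf{C}_{z_{1}}^{(1)}$ and $\mathbf{C}_{z_{2}}^{(2)}$, which is symmetric positive definite because each block is. The crucial bookkeeping step is then to define renormalized mixing weights. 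Set $w_{z_{1}z_{2}}=\pi_{z_{1}}^{(1)}\pi_{z_{2}}^{(2)}c_{z_{1}z_{2}}$ and $S=\sum_{\zeta_{1},\zeta_{2}}w_{\zeta_{1}\zeta_{2}}$, and put $\tilde{\pi}_{z_{1}z_{2}}=w_{z_{1}z_{2}}/S$; these are strictly positive and sum to one.

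Dividing both the numerator and denominator by $S$, one obtains
\[
f_{[12]}\left(\bm{y}|\bm{x}\right)=\sum_{z_{1},z_{2}}\frac{\tilde{\pi}_{z_{1}z_{2}}\phi_{p}\left(\bm{x};\tilde{\bm{\mu}}_{z_{1}z_{2}},\tilde{\bm{\Sigma}}_{z_{1}z_{2}}\right)\phi_{q+r}\left(\bm{y};\tilde{\bm{a}}_{z_{1}z_{2}},\tilde{\mathbf{C}}_{z_{1}z_{2}}\right)}{\sum_{\zeta_{1},\zeta_{2}}\tilde{\pi}_{\zeta_{1}\zeta_{2}}\phi_{p}\left(\bm{x};\tilde{\bm{\mu}}_{\zeta_{1}\zeta_{2}},\tilde{\bm{\Sigma}}_{\zeta_{1}\zeta_{2}}\right)}\text{,}
\]
which, after flattening the double index $\left(z_{1},z_{2}\right)$ into a single index over $\left[n_{1}n_{2}\right]$, is exactly the template that defines membership in $\mathcal{L}_{q+r}^{*}\left(\mathbb{X}\right)$.

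The main obstacle is really a piece of algebraic bookkeeping rather than a conceptual one: one has to be careful that, when Lemma \ref{lem: product} is applied simultaneously in the numerator and the denominator, the multiplicative constants $c_{z_{1}z_{2}}$ produced there are the same on both sides so that they can be absorbed into a \emph{single} renormalization constant $S$, which is what allows the resulting $\tilde{\pi}_{z_{1}z_{2}}$ to be genuine, positive mixture weights summing to one. Once this alignment is spelled out, verifying that $\tilde{\mathbf{C}}_{z_{1}z_{2}}$ is symmetric positive definite and that the new Gaussian-in-$\bm{x}$ parameters $\left(\tilde{\bm{\mu}}_{z_{1}z_{2}},\tilde{\bm{\Sigma}}_{z_{1}z_{2}}\right)$ are valid follows immediately from Lemma \ref{lem: product}.
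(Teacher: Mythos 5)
Your proposal is correct and follows essentially the same route as the paper's own proof: expand the product as a double sum over expert pairs, collapse the $\phi_{p}$ products in both numerator and denominator via Lemma \ref{lem: product}, combine the expert densities into a block-diagonal $(q+r)$-dimensional Gaussian, renormalize the weights $c_{z_{1}z_{2}}\pi_{z_{1}}^{(1)}\pi_{z_{2}}^{(2)}$ by their sum, and flatten the double index. The point you flag about the constants $c_{z_{1}z_{2}}$ matching in numerator and denominator is exactly the step the paper uses to define $\pi_{(st)}$ as valid mixture weights.
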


\subsection{Proof of Theorem \ref{thm: MO Norets}}

By Theorem \ref{thm: Univariate case Norets}, under Assumptions {[}B1{]}
and {[}B2{]}, for each $j\in\left[q\right]$ and $\epsilon>0$, there
exists an $n_{j}$ and $\bm{\theta}_{j}$ that specifies a function

\[
f\left(y_{j}|\bm{x};\bm{\theta}_{j}\right)=\sum_{z=1}^{n_{j}}\frac{\pi_{jz}\phi_{p}\left(\bm{x};\bm{\mu}_{jz},\bm{\Sigma}_{jz}\right)\phi_{1}\left(y_{j};a_{jz},\sigma_{jz}^{2}\right)}{\sum_{\zeta=1}^{n_{j}}\pi_{j\zeta}\phi_{p}\left(\bm{x};\bm{\mu}_{j\zeta},\bm{\Sigma}_{j\zeta}\right)}\text{,}
\]
in $\mathcal{L}_{1}^{*}\left(\mathbb{X}\right)$, such that 
\[
\int_{\mathbb{X}\times\mathbb{Y}_{j}}\log\frac{g_{Y_{j}|\bm{X}}\left(y_{j}|\bm{x}\right)}{f\left(y_{j}|\bm{x};\bm{\theta}\right)}\text{d}G\left(\bm{x},y_{j}\right)<\epsilon
\]
is satisfied.

We complete the proof constructively. That is, we can show that the
product of the marginal PDFs $f\left(y_{j}|\bm{x};\bm{\theta}_{j}\right)$
yields a joint PDF $f\left(\bm{y}|\bm{x};\bm{\theta}\right)$, which
is in the class $\mathcal{L}_{q}^{*}\left(\mathbb{X}\right)$. This
is achieved via repeated applications of Lemma \ref{lem: closure under mult}.
We obtain the desired conclusion by noting that $\mathcal{L}_{q}^{*}\left(\mathbb{X}\right)\subset\mathcal{L}_{q}\left(\mathbb{X}\right)$.

\subsection{Proof of Theorem \ref{thm: Denseness of Mean}}

Let $\mathbb{X}$ be a compact set. Define $\mathbf{e}_{j}$ to be
a column vector with 1 in the $j\text{th}$ position and 0, elsewhere.
Let $\bm{u}^{\top}\left(\bm{x}\right)=\left(u_{1}\left(\bm{x}\right),\dots,u_{q}\left(\bm{x}\right)\right)\in\mathcal{C}_{q}\left(\mathbb{X}\right)$
be an arbitrary continuous MO function over $\mathbb{X}$. By Theorem
\ref{thm: Wang}, there exists an MO mean function
\[
m_{j}\left(\bm{x};\bm{\theta}_{j}\right)=\sum_{z=1}^{n_{j}}\frac{\pi_{jz}\phi_{p}\left(\bm{x};\bm{\mu}_{jz},\bm{\Sigma}_{jz}\right)}{\sum_{\zeta=1}^{n}\pi_{j\zeta}\phi_{p}\left(\bm{x};\bm{\mu}_{j\zeta},\bm{\Sigma}_{j\zeta}\right)}a_{jz}\text{,}
\]
for each $j\in\left[q\right]$, such that $\text{d}_{\infty}\left(m_{j}\left(\bm{x};\bm{\theta}_{j}\right),u_{j}\left(\bm{x}\right)\right)<\epsilon/q$,
for any $\epsilon>0$. Here, $\bm{\theta}_{j}$ is a parameter vector
that contains the unique elements of $\bm{\mu}_{jz}$, $\bm{\Sigma}_{jz}$,
and $a_{jz}\in\mathbb{R}$, for each $z\in n_{j}$, where $n_{j}\in\mathbb{N}$,
for each $j\in\left[q\right]$. Now, write

\[
\bm{m}_{j}\left(\bm{x};\bm{\theta}_{j}\right)=m_{j}\left(\bm{x};\bm{\theta}_{j}\right)\times\mathbf{e}_{j}
\]
and note that for any $k\ne j$, $\bm{m}_{jk}\left(\bm{x};\bm{\theta}_{j}\right)=0$,
for all $\bm{x}\in\mathbb{X}$.

Consider the fact that the $j\text{th}$ coordinate of the function
\begin{equation}
\bm{m}\left(\bm{x};\bm{\theta}\right)=\sum_{j=1}^{q}\bm{m}_{j}\left(\bm{x};\bm{\theta}_{j}\right)\label{eq: sum of yj}
\end{equation}
is only influenced by the $j\text{th}$ functional $\bm{m}_{j}\left(\bm{x};\bm{\theta}_{j}\right)$,
by construction. Thus, at each coordinate $j$, we have
\[
\text{d}_{\infty}\left(m_{j}\left(\bm{x}\right),u_{j}\left(\bm{x}\right)\right)=\text{d}_{\infty}\left(m_{j}\left(\bm{x};\bm{\theta}_{j}\right),u_{j}\left(\bm{x}\right)\right)<\epsilon/q\text{.}
\]
By definition of the induced distance, we therefore obtain the result
that
\begin{eqnarray*}
\text{d}_{q,\infty}\left(\bm{m},\bm{u}\right) & = & \sum_{j=1}^{q}\text{d}_{\infty}\left(m_{j}\left(\bm{x};\bm{\theta}_{j}\right),u_{j}\left(\bm{x}\right)\right)\\
 & < & q\times\left(\epsilon/q\right)=\epsilon\text{.}
\end{eqnarray*}

It suffices to show that (\ref{eq: sum of yj}) is a function in the
class $\mathcal{M}_{q}^{*}\left(\mathbb{X}\right)$. We obtain such
a result by repeated application of Lemma \ref{lem: closure under add}. 

\section{Discussion and conclusions}

Theorem \ref{thm: MO Norets} implies that all $q$ univariate $g_{Y_{j}|\bm{X}}\left(y_{j}|\bm{x}\right)$
conditional PDFs ($j\in\left[q\right]$) of a $q\text{-variate}$
target conditional PDF $g_{\bm{Y}|\bm{X}}\left(\bm{y}|\bm{x}\right)$
can be approximated to an arbitrary degree of accuracy via a Gaussian
gated MoLE model with Gaussian linear experts, with respect to a Kullback-Leibler
like divergence, assuming the fulfillment of Assumptions {[}B1{]}
and {[}B2{]}. Unfortunately, the statement of the theorem provides
no guarantees regarding the approximation accuracy of the dependence
structures between each of the $q$ univariate variables $y_{j}$,
conditioned on the observation $\bm{X}=\bm{x}$. Using Theorem \ref{thm: Univariate case Norets},
we cannot prove such a result using algebraic manipulations alone,
in the manner that has been used to prove Theorem \ref{thm: MO Norets}.
Proving that dependence structures can also be approximated to an
arbitrary degree of accuracy is a topic of ongoing research in the
literature. Such results may be sought via adaptations and extensions
of the joint density approximation results of \citet[Sec. 33.1]{DasGupta2008}
or \citet{Norets2012}, to the problem of multivariate conditional
density approximation.

Finally, we note that Theorems \ref{thm: Univariate case Norets}\textendash \ref{thm: Denseness of Mean}
do not provide rates, regarding the reduction of approximation error
as functions of $q$ and $n$. Rate results would require stronger
assumptions on the space of approximands. For example, we may utilize
the results of \citet{Zeevi1998} in order to obtain an approximation
rate for functional approximations from the class $\mathcal{M}_{q}\left(\mathbb{X}\right)$,
under the additional assumption that the MO approximand is a member
of some appropriate Sobolev space. Similarly, using the results of
\citet{Jiang1999a}, we may obtain approximation rates for conditional
approximations from the class $\mathcal{L}_{q}\left(\mathbb{X}\right)$,
under the additional assumptions that the approximand univariate conditional
PDFs satisfy are restricted to affine-dependence structures, with
respect to the input vector.

In this paper, we sought to prove the most general results that were
available, regarding the approximation capability of the Gaussian
gated MoLE model. As such, we do not wish to impose more assumptions
than is strictly necessary in order to establish meaningful theorems.
We leave the establishment of further interesting results that may
require more stringent assumptions to the future.

\section*{Acknowledgments}

Hien Nguyen is funded by Australian Research Council (ARC) grants
DE170101134 andDP180101192, and a La Trobe University startup grant.
This research is funded directly by the Inria LANDeR project. The
authors thank the anonymous Reviewers for their useful comments that
have improved the exposition of the article.

\section*{Appendix}

\subsection*{The induced norm}

Let $\mathcal{U}$ be a normed vector space, and let $\bm{u}$ and
$\bm{v}$ be arbitrary elements of $\mathcal{U}$. We say that the
operator $\left\Vert \cdot\right\Vert $ is a norm on $\mathcal{U}$
if it satisfies the following assumptions: (i) $\left\Vert \bm{u}\right\Vert \ge0$
and $\left\Vert \bm{u}\right\Vert =0$ if and only if $\bm{u}=\mathbf{0}$,
(ii) For every $c\in\mathbb{R}$, $\left\Vert c\bm{u}\right\Vert =\left|c\right|\left\Vert \bm{u}\right\Vert $,
and (iii) $\left\Vert \bm{u}+\bm{v}\right\Vert \le\left\Vert \bm{u}\right\Vert +\left\Vert \bm{v}\right\Vert $
(cf. \citealp[Sec. 4.6]{Oden2010}).
\begin{prop}
For any vector space $\mathcal{U}_{q}\left(\mathbb{X}\right)$ of
MO functions on $\mathbb{X}$, the operator $\left\Vert \cdot\right\Vert _{\Sigma}$
satisfies the definition of a norm.
\end{prop}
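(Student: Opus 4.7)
The plan is to verify directly that $\left\Vert \cdot\right\Vert _{q,\infty}$ (the operator written as $\left\Vert \cdot\right\Vert _{\Sigma}$ in the proposition statement, defined by $\left\Vert \bm{u}\right\Vert _{q,\infty}=\sum_{j=1}^{q}\left\Vert u_{j}\right\Vert _{\infty}$) satisfies each of the three norm axioms (i)--(iii), by reducing each axiom to the analogous property for the uniform norm $\left\Vert \cdot\right\Vert _{\infty}$ on $\mathcal{C}(\mathbb{X})$, combined with elementary properties of finite sums of non-negative reals. Since $\left\Vert \cdot\right\Vert _{\infty}$ is already a norm, no delicate analysis should be needed, and I do not anticipate any real obstacle; the argument is essentially a bookkeeping exercise across the $q$ coordinates.

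For axiom (i), I would first note that $\left\Vert u_{j}\right\Vert _{\infty}\ge0$ for each $j\in[q]$, so $\left\Vert \bm{u}\right\Vert _{q,\infty}\ge0$ as a sum of non-negative terms. For the definiteness, since each summand is non-negative, $\left\Vert \bm{u}\right\Vert _{q,\infty}=0$ forces $\left\Vert u_{j}\right\Vert _{\infty}=0$ for every $j$, which by the definiteness of the uniform norm yields $u_{j}\equiv0$ on $\mathbb{X}$ for all $j$, i.e.\ $\bm{u}=\mathbf{0}$; the converse implication is immediate.

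For axiom (ii), given any $c\in\mathbb{R}$, I would apply the homogeneity of the uniform norm coordinatewise, writing
\[
\left\Vert c\bm{u}\right\Vert _{q,\infty}=\sum_{j=1}^{q}\left\Vert cu_{j}\right\Vert _{\infty}=\sum_{j=1}^{q}\left|c\right|\left\Vert u_{j}\right\Vert _{\infty}=\left|c\right|\sum_{j=1}^{q}\left\Vert u_{j}\right\Vert _{\infty}=\left|c\right|\left\Vert \bm{u}\right\Vert _{q,\infty}\text{.}
\]
For axiom (iii), I would apply the triangle inequality for $\left\Vert \cdot\right\Vert _{\infty}$ coordinatewise and then sum over $j\in[q]$, to obtain $\left\Vert \bm{u}+\bm{v}\right\Vert _{q,\infty}=\sum_{j=1}^{q}\left\Vert u_{j}+v_{j}\right\Vert _{\infty}\le\sum_{j=1}^{q}\left(\left\Vert u_{j}\right\Vert _{\infty}+\left\Vert v_{j}\right\Vert _{\infty}\right)=\left\Vert \bm{u}\right\Vert _{q,\infty}+\left\Vert \bm{v}\right\Vert _{q,\infty}$. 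The only care needed is to confirm that the relevant objects $u_{j}+v_{j}$ and $cu_{j}$ lie in the same scalar function class for which $\left\Vert \cdot\right\Vert _{\infty}$ has been defined, which is automatic because $\mathcal{U}_{q}(\mathbb{X})$ is assumed to be a vector space of MO functions on $\mathbb{X}$.
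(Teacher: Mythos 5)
Your proposal is correct and follows essentially the same route as the paper's own proof: verify axioms (i)--(iii) coordinatewise by reducing each to the corresponding property of $\left\Vert \cdot\right\Vert _{\infty}$ and summing over $j\in\left[q\right]$. No gaps; the argument matches the paper's.
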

\begin{proof}
Let $\bm{u}^{\top}=\left(u_{1},\dots u_{q}\right)$ and $\bm{v}^{\top}=\left(v_{1},\dots,v_{q}\right)$
be two arbitrary elements in $\mathcal{U}_{q}$. Recall that the operator
$\left\Vert \cdot\right\Vert _{\infty}$ is a norm over any vector
space of single-output functions. This implies that $\left\Vert u_{j}\right\Vert _{\infty}\ge0$
for each $j\in\left[q\right]$ and thus $\left\Vert \bm{u}\right\Vert _{q,\infty}=\sum_{j=1}^{q}\left\Vert u_{j}\right\Vert _{\infty}\ge0$.

Suppose that $\left\Vert \bm{u}\right\Vert _{q,\infty}=0$. This implies
that each component of $\sum_{j=1}^{q}\left\Vert u_{j}\right\Vert _{\infty}$
must equal to zero since no component may take a negative value. However,
since $\left\Vert \cdot\right\Vert _{\infty}$ is a norm, this implies
that $\bm{u}=\mathbf{0}$. Now suppose that $\bm{u}=\mathbf{0}$.
The direct definition of $\left\Vert \cdot\right\Vert _{q,\infty}$
leads to the result that $\left\Vert \bm{u}\right\Vert _{q,\infty}=0$.
Thus, together, $\left\Vert \cdot\right\Vert _{q,\infty}$ fulfills
Assumption (i).

Assumption (ii) is shown to be fulfilled by observing the direct chain
of equalities:
\begin{align*}
\left\Vert c\bm{u}\right\Vert _{q,\infty} & =\sum_{j=1}^{q}\left\Vert cu_{j}\right\Vert _{\infty}\\
 & =\sum_{j=1}^{q}\left|c\right|\left\Vert u_{j}\right\Vert _{\infty}\\
 & =\left|c\right|\sum_{j=1}^{q}\left\Vert u_{j}\right\Vert _{\infty}\\
 & =\left|c\right|\left\Vert \bm{u}\right\Vert _{q,\infty}\text{,}
\end{align*}
where the second line is due to the fact that $\left\Vert \cdot\right\Vert _{\infty}$
is a norm.

Assumption (iii) is also shown to be fulfilled by observing the chain
of arguments:
\begin{align*}
\left\Vert \bm{u}+\bm{v}\right\Vert _{q,\infty} & =\sum_{j=1}^{q}\left\Vert u_{j}+v_{j}\right\Vert _{\infty}\\
 & \le\sum_{j=1}^{q}\left[\left\Vert u_{j}\right\Vert _{\infty}+\left\Vert v_{j}\right\Vert _{\infty}\right]\\
 & =\sum_{j=1}^{q}\left\Vert u_{j}\right\Vert _{\infty}+\sum_{j=1}^{q}\left\Vert v_{j}\right\Vert _{\infty}\\
 & =\left\Vert \bm{u}\right\Vert _{q,\infty}+\left\Vert \bm{v}\right\Vert _{q,\infty}\text{,}
\end{align*}
where the second line is again due to the fact that $\left\Vert \cdot\right\Vert _{\infty}$
is a norm. The proof is thus complete.
\end{proof}

\subsection*{Proof of Lemma \ref{lem: closure under add}}

Since $\bm{m}_{\left[1\right]},\bm{m}_{\left[2\right]}\in\mathcal{M}_{q}^{*}\left(\mathbb{X}\right)$,
we can write $\bm{y}_{\left[k\right]}$ as
\[
\bm{m}_{\left[k\right]}\left(\bm{x};\bm{\theta}_{k}\right)=\sum_{z=1}^{n_{k}}\frac{\pi_{kz}\phi_{p}\left(\bm{x};\bm{\mu}_{kz},\bm{\Sigma}_{kz}\right)}{\sum_{\zeta=1}^{n_{k}}\pi_{k\zeta}\phi_{p}\left(\bm{x};\bm{\mu}_{k\zeta},\bm{\Sigma}_{k\zeta}\right)}\bm{a}_{kz}\text{,}
\]
where $\bm{\theta}_{k}$ contains the unique elements of $\bm{\mu}_{kz}$,
$\bm{\Sigma}_{kz}$, and $\bm{a}_{kz}$ ($z\in\left[n_{k}\right]$;
$n_{k}\in\mathbb{N}$), for each $k\in\left\{ 1,2\right\} $. 

Next, we write
\begin{eqnarray*}
\bm{m}_{\left[12\right]}\left(\bm{x}\right) & = & \sum_{k=1}^{2}\sum_{z=1}^{n_{k}}\frac{\pi_{kz}\phi_{p}\left(\bm{x};\bm{\mu}_{kz},\bm{\Sigma}_{kz}\right)}{\sum_{\zeta=1}^{n_{k}}\pi_{k\zeta}\phi_{p}\left(\bm{x};\bm{\mu}_{k\zeta},\bm{\Sigma}_{k\zeta}\right)}\bm{a}_{kz}\\
 & = & \sum_{z=1}^{n_{1}}\frac{\pi_{1z}\phi_{p}\left(\bm{x};\bm{\mu}_{1z},\bm{\Sigma}_{1z}\right)\sum_{\zeta=1}^{n_{2}}\pi_{2\zeta}\phi_{p}\left(\bm{x};\bm{\mu}_{2\zeta},\bm{\Sigma}_{2\zeta}\right)}{\prod_{k=1}^{2}\sum_{\zeta=1}^{n_{k}}\pi_{k\zeta}\phi_{p}\left(\bm{x};\bm{\mu}_{k\zeta},\bm{\Sigma}_{k\zeta}\right)}\bm{a}_{1z}\\
 &  & +\sum_{z=1}^{n_{2}}\frac{\pi_{2z}\phi_{p}\left(\bm{x};\bm{\mu}_{2z},\bm{\Sigma}_{2z}\right)\sum_{\zeta=1}^{n_{1}}\pi_{1\zeta}\phi_{p}\left(\bm{x};\bm{\mu}_{1\zeta},\bm{\Sigma}_{1\zeta}\right)}{\prod_{k=1}^{2}\sum_{\zeta=1}^{n_{k}}\pi_{k\zeta}\phi_{p}\left(\bm{x};\bm{\mu}_{k\zeta},\bm{\Sigma}_{k\zeta}\right)}\bm{a}_{2z}\text{.}
\end{eqnarray*}
For each $s\in\left[n_{1}\right]$ and $t\in\left[n_{2}\right]$,
we can perform the following mappings: $\bm{a}_{\left(st\right)}=\bm{a}_{1s}+\bm{a}_{2t}$,
$\bar{\pi}_{\left(st\right)}=\pi_{1s}\pi_{2t}$, $\bm{\Sigma}_{\left(st\right)}^{-1}=\bm{\Sigma}_{1s}^{-1}+\bm{\Sigma}_{2t}^{-1}$,
and $\bm{\mu}_{\left(st\right)}=\bm{\Sigma}_{\left(st\right)}\left(\bm{\Sigma}_{1s}^{-1}\bm{\mu}_{1s}+\bm{\Sigma}_{2t}^{-1}\bm{\mu}_{2t}\right)$.

Using Lemma \ref{lem: product}, we can write
\begin{eqnarray*}
\bm{m}_{\left[12\right]}\left(\bm{x}\right) & = & \sum_{s=1}^{n_{1}}\sum_{t=1}^{n_{2}}\frac{c_{st}\bar{\pi}_{\left(st\right)}\phi_{d}\left(\bm{x};\bm{\mu}_{\left(st\right)},\bm{\Sigma}_{\left(st\right)}\right)}{\sum_{\xi=1}^{n_{1}}\sum_{\zeta=1}^{n_{2}}c_{\xi\zeta}\bar{\pi}_{\left(\xi\zeta\right)}\phi_{d}\left(\bm{x};\bm{\mu}_{\left(\xi\zeta\right)},\bm{\Sigma}_{\left(\xi\zeta\right)}\right)}\bm{a}_{\left(st\right)}\\
 & = & \sum_{s=1}^{n_{1}}\sum_{t=1}^{n_{2}}\frac{\pi_{\left(st\right)}\phi_{d}\left(\bm{x};\bm{\mu}_{\left(st\right)},\bm{\Sigma}_{\left(st\right)}\right)}{\sum_{\xi=1}^{n_{1}}\sum_{\zeta=1}^{n_{2}}\pi_{\left(\xi\zeta\right)}\phi_{d}\left(\bm{x};\bm{\mu}_{\left(\xi\zeta\right)},\bm{\Sigma}_{\left(\xi\zeta\right)}\right)}\bm{a}_{\left(st\right)}\text{,}
\end{eqnarray*}
where $\pi_{\left(st\right)}=c_{st}\bar{\pi}_{\left(st\right)}/\sum_{\xi=1}^{n_{1}}\sum_{\zeta=1}^{n_{2}}c_{\xi\zeta}\bar{\pi}_{\left(\xi\zeta\right)}$,
for each $s$ and $t$. Note that this implies that $\pi_{\left(st\right)}>0$
($s\in\left[n_{1}\right]$, $t\in\left[n_{2}\right]$) and $\sum_{s=1}^{n_{1}}\sum_{t=1}^{n_{2}}\pi_{\left(st\right)}=1$,
as required, since $c_{st}>0$.

Finally, utilizing some pairing function (see e.g., \citealp[Sec. 1.3]{Smorynski1991}),
we may map every pair $\left(s,t\right)\in\left[n_{1}\right]\times\left[n_{2}\right]$
uniquely to a $z\in\left[n_{\left[12\right]}\right]$, where $n_{\left[12\right]}=n_{1}n_{2}$.
Using this mapping, we can then write
\begin{eqnarray*}
\bm{m}_{\left[12\right]}\left(\bm{x}\right) & = & \sum_{z=1}^{n_{\left[12\right]}}\frac{\pi_{\left[12\right]z}\phi_{d}\left(\bm{x};\bm{\mu}_{\left[12\right]z},\bm{\Sigma}_{\left[12\right]z}\right)}{\sum_{\zeta=1}^{n_{\left[12\right]}}\pi_{\left[12\right]\zeta}\phi_{d}\left(\bm{x};\bm{\mu}_{\left[12\right]\zeta},\bm{\Sigma}_{\left[12\right]\zeta}\right)}\bm{a}_{\left[12\right]z}\\
 & = & \bm{m}_{\left[12\right]}\left(\bm{x};\bm{\theta}_{\left[12\right]}\right)\text{,}
\end{eqnarray*}
where $\bm{\theta}_{\left[12\right]}$ is a parameter vector that
contains the unique elements of $\pi_{\left[12\right]z}$, $\bm{\mu}_{\left[12\right]z}$,
$\bm{\Sigma}_{\left[12\right]z}$, and $\bm{a}_{\left[12\right]z}$
for each $z\in\left[n_{\left[12\right]}\right]$. Thus, we have shown
that $\bm{m}_{\left[12\right]}=\bm{m}_{\left[1\right]}+\bm{m}_{\left[2\right]}$
is in the class of functions $\mathcal{M}_{q}^{*}\left(\mathbb{X}\right)$.

\subsection*{Proof of Lemma \ref{lem: closure under mult}}

Since $f_{\left[1\right]}\in\mathcal{L}_{q}^{*}\left(\mathbb{X}\right)$
and $f_{\left[2\right]}\in\mathcal{L}_{r}^{*}\left(\mathbb{X}\right)$,
we can write
\[
f_{\left[1\right]}\left(\bm{y}_{\left[1\right]}|\bm{x};\bm{\theta}_{1}\right)=\sum_{z=1}^{n_{1}}\frac{\pi_{1z}\phi_{p}\left(\bm{x};\bm{\mu}_{1z},\bm{\Sigma}_{1z}\right)\phi_{q}\left(\bm{y}_{\left[1\right]};\bm{a}_{1z},\mathbf{C}_{1z}\right)}{\sum_{\zeta=1}^{n_{1}}\pi_{1\zeta}\phi_{p}\left(\bm{x};\bm{\mu}_{1\zeta},\bm{\Sigma}_{1\zeta}\right)}\text{,}
\]
and
\[
f_{\left[2\right]}\left(\bm{y}_{\left[2\right]}|\bm{x};\bm{\theta}_{2}\right)=\sum_{z=1}^{n_{2}}\frac{\pi_{2z}\phi_{p}\left(\bm{x};\bm{\mu}_{2z},\bm{\Sigma}_{2z}\right)\phi_{r}\left(\bm{y}_{\left[2\right]};\bm{a}_{2z},\mathbf{C}_{2z}\right)}{\sum_{\zeta=1}^{n_{2}}\pi_{2\zeta}\phi_{p}\left(\bm{x};\bm{\mu}_{2\zeta},\bm{\Sigma}_{2\zeta}\right)}\text{,}
\]
where $\bm{\theta}_{k}$ contains the unique elements of $\bm{\mu}_{kz}$,
$\bm{\Sigma}_{kz}$, $\bm{a}_{kz}$, and $\mathbf{C}_{kz}$ ($z\in\left[n_{k}\right]$;
$n_{k}\in\mathbb{N}$), for each $k\in\left\{ 1,2\right\} $. Here,
$\bm{y}^{\top}=\left(\bm{y}_{\left[1\right]}^{\top},\bm{y}_{\left[2\right]}^{\top}\right)$,
where $\bm{y}_{\left[1\right]}\in\mathbb{R}^{q}$ and $\bm{y}_{2}\in\mathbb{R}^{r}$.

Next, write
\[
f_{\left[12\right]}\left(\bm{y}|\bm{x}\right)=\sum_{z=1}^{n_{1}}\frac{\pi_{1z}\phi_{p}\left(\bm{x};\bm{\mu}_{1z},\bm{\Sigma}_{1z}\right)\phi_{q}\left(\bm{y}_{\left[1\right]};\bm{a}_{1z},\mathbf{C}_{1z}\right)}{\sum_{\zeta=1}^{n_{1}}\pi_{1\zeta}\phi_{p}\left(\bm{x};\bm{\mu}_{1\zeta},\bm{\Sigma}_{1\zeta}\right)}\times\sum_{z=1}^{n_{2}}\frac{\pi_{2z}\phi_{p}\left(\bm{x};\bm{\mu}_{2z},\bm{\Sigma}_{2z}\right)\phi_{r}\left(\bm{y}_{\left[2\right]};\bm{a}_{2z},\mathbf{C}_{2z}\right)}{\sum_{\zeta=1}^{n_{2}}\pi_{2\zeta}\phi_{p}\left(\bm{x};\bm{\mu}_{2\zeta},\bm{\Sigma}_{2\zeta}\right)}\text{,}
\]
and make the following mapping for each $s\in\left[n_{1}\right]$
and $t\in\left[n_{2}\right]$: $\bar{\pi}_{\left(st\right)}=\pi_{1s}\pi_{2t}$,
$\bm{\Sigma}_{\left(st\right)}^{-1}=\bm{\Sigma}_{1s}^{-1}+\bm{\Sigma}_{2t}^{-1}$,
and $\bm{\mu}_{\left(st\right)}=\bm{\Sigma}_{\left(st\right)}\left(\bm{\Sigma}_{1s}^{-1}\bm{\mu}_{1s}+\bm{\Sigma}_{2t}^{-1}\bm{\mu}_{2t}\right)$.
Furthermore, for each $s$ and $t$,
\begin{eqnarray*}
\phi_{q}\left(\bm{y}_{\left[1\right]};\bm{a}_{1s},\mathbf{C}_{1s}\right)\phi_{r}\left(\bm{y}_{\left[2\right]};\bm{a}_{2t},\mathbf{C}_{2t}\right) & = & \phi_{q+r}\left(\left[\begin{array}{c}
\bm{y}_{\left[1\right]}\\
\bm{y}_{\left[2\right]}
\end{array}\right];\left[\begin{array}{c}
\bm{a}_{1s}\\
\bm{a}_{2t}
\end{array}\right],\left[\begin{array}{cc}
\mathbf{C}_{1s} & \mathbf{0}\\
\mathbf{0} & \mathbf{C}_{2t}
\end{array}\right]\right)\\
 & = & \phi_{q+r}\left(\bm{y};\bm{a}_{\left(st\right)},\mathbf{C}_{\left(st\right)}\right)\text{,}
\end{eqnarray*}
specifies a $\left(q+r\right)\text{-dimensional}$ multivariate normal
PDF.

Using Lemma \ref{lem: product}, we can write
\begin{eqnarray*}
f_{\left[12\right]} & = & \sum_{s=1}^{n_{1}}\sum_{t=1}^{n_{2}}\frac{c_{st}\bar{\pi}_{\left(st\right)}\phi_{d}\left(\bm{x};\bm{\mu}_{\left(st\right)},\bm{\Sigma}_{\left(st\right)}\right)\phi_{q+r}\left(\bm{y};\bm{a}_{\left(st\right)},\mathbf{C}_{\left(st\right)}\right)}{\sum_{\xi=1}^{n_{1}}\sum_{\zeta=1}^{n_{2}}c_{\xi\zeta}\bar{\pi}_{\left(\xi\zeta\right)}\phi_{d}\left(\bm{x};\bm{\mu}_{\left(\xi\zeta\right)},\bm{\Sigma}_{\left(\xi\zeta\right)}\right)}\\
 & = & \sum_{s=1}^{n_{1}}\sum_{t=1}^{n_{2}}\frac{\pi_{\left(st\right)}\phi_{d}\left(\bm{x};\bm{\mu}_{\left(st\right)},\bm{\Sigma}_{\left(st\right)}\right)\phi_{q+r}\left(\bm{y};\bm{a}_{\left(st\right)},\mathbf{C}_{\left(st\right)}\right)}{\sum_{\xi=1}^{n_{1}}\sum_{\zeta=1}^{n_{2}}\pi_{\left(\xi\zeta\right)}\phi_{d}\left(\bm{x};\bm{\mu}_{\left(\xi\zeta\right)},\bm{\Sigma}_{\left(\xi\zeta\right)}\right)}\text{,}
\end{eqnarray*}
where $\pi_{\left(st\right)}=c_{st}\bar{\pi}_{\left(st\right)}/\sum_{\xi=1}^{n_{1}}\sum_{\zeta=1}^{n_{2}}c_{\xi\zeta}\bar{\pi}_{\left(\xi\zeta\right)}$,
for each $s$ and $t$. 

In a similar manner to the approach from Lemma \ref{lem: closure under add},
we may map every pair $\left(s,t\right)\in\left[n_{1}\right]\times\left[n_{2}\right]$
uniquely to a $z\in\left[n_{\left[12\right]}\right]$, where $n_{\left[12\right]}=n_{1}n_{2}$.
Using this mapping, we can then write
\begin{eqnarray*}
f_{\left[12\right]}\left(\bm{y}|\bm{x}\right) & = & \sum_{z=1}^{n_{\left[12\right]}}\frac{\pi_{\left[12\right]z}\phi_{d}\left(\bm{x};\bm{\mu}_{\left[12\right]z},\bm{\Sigma}_{\left[12\right]z}\right)\phi_{q+r}\left(\bm{y};\bm{a}_{\left(st\right)},\mathbf{C}_{\left(st\right)}\right)}{\sum_{\zeta=1}^{n_{\left[12\right]}}\pi_{\left[12\right]\zeta}\phi_{d}\left(\bm{x};\bm{\mu}_{\left[12\right]\zeta},\bm{\Sigma}_{\left[12\right]\zeta}\right)}\\
 & = & f_{\left[12\right]}\left(\bm{y}|\bm{x};\bm{\theta}_{\left[12\right]}\right)\text{,}
\end{eqnarray*}
where $\bm{\theta}_{\left[12\right]}$ is a parameter vector that
contains the unique elements of $\pi_{\left[12\right]z}$, $\bm{\mu}_{\left[12\right]z}$,
$\bm{\Sigma}_{\left[12\right]z}$, $\bm{a}_{\left[12\right]z}$, and
$\mathbf{C}_{\left[12\right]z}$, for each $z\in\left[n_{\left[12\right]}\right]$.
Thus, we have shown that $f_{\left[12\right]}=f_{\left[1\right]}f_{\left[2\right]}$
is in the class of functions $\mathcal{L}_{q+r}^{*}\left(\mathbb{X}\right)$.

\bibliographystyle{apalike2}
\bibliography{MASTERBIB}

\end{document}